\documentclass[sigconf,natbib=true,anonymous=false,review=false,screen=true]{acmart}

\acmSubmissionID{6205}

\usepackage[inline]{enumitem}
\usepackage{acronym}
\usepackage{xkcdcolors}

\PassOptionsToPackage{dvipsnames}{xcolor}

\usepackage{dsfont}
\usepackage{natbib}
\usepackage[noend]{algorithmic}
\usepackage{algorithm}
\usepackage{bbm}
\usepackage{beramono}
\usepackage{bm}
\usepackage{booktabs}
\usepackage[skip=0pt]{caption}
\usepackage{cleveref}
\usepackage[T1]{fontenc}
\usepackage{graphicx}
\usepackage{hyperref}
\usepackage{listings}
\usepackage{multirow}
\usepackage{natbib}
\usepackage{subcaption}
\usepackage{tabularx}
\usepackage[dvipsnames]{xcolor}
\usepackage{numprint}
\usepackage[normalem]{ulem}
\usepackage{multirow}
\usepackage{float}
\usepackage{multicol, blindtext}
\usepackage{amsthm}
\usepackage[english]{babel}
\usepackage{amsthm}
\usepackage[T1]{fontenc}
\usepackage{mleftright}
\usepackage{tikz}
\usepackage{color}
\usepackage{graphicx}
\usetikzlibrary{backgrounds}
\usepackage{subcaption}
\usepackage{adjustbox}
\usepackage{mathtools}
\usepackage{appendix}
\usepackage{enumitem}

\copyrightyear{2023}
\acmYear{2023}
\setcopyright{rightsretained}
\acmConference[SIGIR '23]{Proceedings of the 46th International ACM SIGIR Conference on Research and Development in Information Retrieval}{July 23--27, 2023}{Taipei, Taiwan}
\acmBooktitle{Proceedings of the 46th International ACM SIGIR Conference on Research and Development in Information Retrieval (SIGIR '23), July 23--27, 2023, Taipei, Taiwan}\acmDOI{10.1145/3539618.3591760}
\acmISBN{978-1-4503-9408-6/23/07}

\acrodef{CF}{collaborative filtering}
\acrodef{LTR}{learning to rank}
\acrodef{NDCG}{normalized discounted cumulative gain}
\acrodef{DCG}{discounted cumulative gain}
\acrodef{VAE}{variational autoencoder}
\acrodef{VAE}{variational autoencoder}
\acrodef{ELBO}{evidence lower bound objective}
\acrodef{IPS}{inverse propensity scoring}
\acrodef{BPR}{bayesian personalized ranking}
\acrodef{MF}{matrix factorization}
\acrodef{MNAR}{missing-not-at-random}
\acrodef{ULTR}{unbiased learning-to-rank}
\acrodef{CLTR}{counterfactual learning to rank}
\acrodef{LOLN}{law of large numbers}
\acrodef{CRM}{counterfactual risk minimization}
\acrodef{IS}{importance sampling}
\acrodef{i.i.d}{independent and identically distributed}
\acrodef{CRM}{counterfactual risk minimization}
\acrodef{PL}{Plackett-Luce}
\acrodef{CTR}{click through rate}
\acrodef{SEA}{safe exploration algorithm}
\acrodef{GENSPEC}{generalization and specialization }
\acrodef{VCRM}{variational counterfactual risk minimization}
\acrodef{SGD}{stochastic gradient descent}

\theoremstyle{definition}

\newcommand{\headernodot}[1]{\vspace{1mm}\noindent\textbf{#1}}
\newcommand{\header}[1]{\headernodot{#1.}}

\looseness=-1
\clubpenalty=-1
\widowpenalty=-1

\setlength{\textfloatsep}{4pt}
\setlength{\dbltextfloatsep}{4pt}
\setlength{\abovecaptionskip}{0pt}
\setlength{\belowcaptionskip}{0pt}
\setlength{\abovedisplayskip}{0pt}
\setlength{\belowdisplayskip}{0pt}
\setlength{\abovedisplayshortskip}{0pt}
\setlength{\belowdisplayshortskip}{0pt}

\allowdisplaybreaks

\author{Shashank Gupta}
\orcid{0000-0003-1291-7951}
\affiliation{%
  \institution{University of Amsterdam}
  \city{Amsterdam}
  \country{The Netherlands}
}
\email{s.gupta2@uva.nl}

\author{Harrie Oosterhuis}
\orcid{0000-0002-0458-9233}
\affiliation{%
  \institution{Radboud Universiteit}
  \city{Nijmegen}
  \country{The Netherlands}
}
\email{harrie.oosterhuis@ru.nl}

\author{Maarten de Rijke}
\orcid{0000-0002-1086-0202}
\affiliation{%
  \institution{University of Amsterdam}
  \city{Amsterdam}
  \country{The Netherlands}
}
\email{m.derijke@uva.nl}

\title[Safe Deployment for Counterfactual Learning to Rank with Exposure-Based Risk Minimization]{Safe Deployment for Counterfactual Learning to Rank\\ with Exposure-Based Risk Minimization}

\allowdisplaybreaks

\begin{document}

\begin{abstract}
\Ac{CLTR} relies on exposure-based \ac{IPS}, a \acs{LTR}-specific adaptation of \ac{IPS} to correct for position bias.
While \ac{IPS} can provide unbiased and consistent estimates, it often suffers from high variance.
Especially when little click data is available, this variance can cause \ac{CLTR} to learn sub-optimal ranking behavior.
Consequently, existing \ac{CLTR} methods bring significant risks with them, as naively deploying their models can result in very negative user experiences. 

We introduce a novel risk-aware \ac{CLTR} method with theoretical guarantees for safe deployment.
We apply a novel exposure-based concept of risk regularization to \ac{IPS} estimation for \acs{LTR}.
Our risk regularization penalizes the mismatch between the ranking behavior of a learned model and a given safe model.
Thereby, it ensures that learned ranking models stay close to a trusted model, when there is high uncertainty in \ac{IPS} estimation,
which greatly reduces the risks during deployment.  
Our experimental results demonstrate the efficacy of our proposed method, which is effective at avoiding initial periods of bad performance when little date is available, while also maintaining high performance at convergence.
For the \ac{CLTR} field, our novel exposure-based risk minimization method enables practitioners to adopt \acs{CLTR} methods in a safer manner that mitigates many of the risks attached to previous methods.
\end{abstract}

\begin{CCSXML}
<ccs2012>
   <concept>
       <concept_id>10002951.10003317.10003338.10003343</concept_id>
       <concept_desc>Information systems~Learning to rank</concept_desc>
       <concept_significance>500</concept_significance>
       </concept>
   <concept>
       <concept_id>10002951.10003317.10003347.10003350</concept_id>
       <concept_desc>Information systems~Recommender systems</concept_desc>
       <concept_significance>300</concept_significance>
       </concept>
 </ccs2012>
\end{CCSXML}

\ccsdesc[500]{Information systems~Learning to rank}

\keywords{Learning to Rank; Counterfactual Learning to Rank; Safety}

\maketitle

\acresetall

\section{Introduction}
\label{sec:intro}
\Ac{LTR} methods optimize ranking systems so that the resulting ranking behavior maximizes a given ranking metric~\cite{liu2009learning}. 
Traditionally, most \ac{LTR} methods applied a supervised learning procedure based on manually-created relevance judgements.
However, obtaining such judgements is time-consuming, expensive and does not scale~\citep{chapelle2011yahoo, qin2010letor}. 
As an alternative, \ac{LTR} methods have been developed that rely on clicks, as they are much cheaper to obtain in abundance in the form of user interaction logs~\citep{joachims2002optimizing}.

Despite its low costs, click data is generally strongly affected by different forms of interaction bias.
Interactions with rankings often suffer from \textit{position bias}~\cite{craswell2008experimental}: the position at which an item was shown often affects its \ac{CTR} more than its relevance.
As a result, the clicks observed in interaction logs are often more reflective of where items were displayed during logging than how relevant users find them.
Thus, naively using this data for \ac{LTR}, without corrections, can result in heavily \emph{biased} models with suboptimal ranking performance~\citep{joachims2017unbiased, wang2016learning}.

To mitigate the bias problem in interaction data, the field of \ac{CLTR} has proposed methods to mitigate bias with unbiased estimation~\citep{joachims2017unbiased}.
\Ac{CLTR} mainly relies on exposure-based \ac{IPS}~\cite{oosterhuis2020unbiased,wang2018position}, a \ac{LTR} specific adaptation of the \ac{IPS} counterfactual estimation method~\cite{swaminathan2015batch, joachims2016counterfactual, horvitz1952generalization}.
Standard exposure-\ac{IPS} weights clicks by the inverse effect of position-bias on the clicked item.
This procedure thus gives more weight to clicks on items that are underrepresented due to position-bias, and vice versa.
In expectation, this removes the effect of position-bias from the loss that is optimized.
 
\header{Unsafe \ac{CLTR}}
Despite enabling unbiased optimization, \ac{IPS} is also known to suffer from high variance~\cite{joachims2017unbiased, oosterhuis2022doubly}. 
In cases with a lack of click data or with large amounts of noise, high variance can make IPS-based \ac{CLTR} unreliable and lead to very sub-optimal ranking models~\citep{jagerman2020safe, oosterhuis2021robust}.
This problem can be so severe that the learned ranking models can be worse than the model used to log the interaction data.
Deploying such a learned model could result in a substantially degraded user experience. 
In other words, despite the improvements that IPS-based \ac{CLTR} can bring, it is also an \emph{unsafe} approach since it may lead to considerable deteriorations.

This (un)safety issue is not unique to \ac{IPS}-based \ac{CLTR}. 
\citet{swaminathan2015batch} address this issue for contextual bandit problems by applying a generalization bound.
Such a bound can provide a high-confidence upper limit on the difference between the true and estimated performance of a bandit policy~\cite{shalev2014understanding, thomas2015high}.
This allows for safer \emph{conservative} optimization.
For instance, \citet{wu2018variance} introduce a bound based on the divergence between the new policy and the logging policy.
This bound avoids policies that stray away from the logging policy, unless there is strong evidence that they are actual improvements.
This method might appear to be a great fit for \ac{CLTR}, but, unfortunately, it is based on action propensities that do not generalize well to the very large action spaces in \ac{CLTR}.
Therefore, there is a need for a conservative generalization bound that is practical and effective in the \ac{CLTR} setting.

\header{Safe \ac{CLTR}}
To address this gap, we propose an exposure-based \ac{CRM} method that is specifically designed for safe \ac{CLTR}.
Similar to how exposure-based \ac{IPS} deals with the large action spaces in ranking settings, our method is based on an exposure-based alternative to action-based generalization bounds.
We first introduce a divergence measure based on differences between the distributions of exposure of a new policy and a safe logging policy.
Then we provide a novel generalization bound and prove that it is a high-confidence lower-bound on the performance of a learned policy.
When uncertain, this bound defaults to preferring the logging policy and thus avoids decreases in performance due to variance.
In other words, with high-confidence, ranking models optimized with this bound are guaranteed to never deteriorate the user experience, even when little data is available.

\header{Main contributions}
We are the first to address \ac{CRM} for \ac{CLTR} and contribute a novel exposure-based \ac{CRM} method for safe \ac{CLTR}.
Our experimental results show that our proposed method is effective at avoiding initial periods of bad performance when little date is available, while also maintaining high performance at convergence.
Our novel exposure-based \ac{CRM} method thus enables safe \ac{CLTR} that can mitigate many of risks attached to previous methods.
We hope that our contribution makes the adoption of \ac{CLTR} methods more attractive to practitioners working on real-world search and recommendation systems.

\section{Related Work}
We review related work on \ac{CLTR} and \ac{CRM} in off-policy learning. 

\subsection{Counterfactual learning to rank}
\Ac{LTR} deals with learning optimal rankings to maximize a pre-defined notion of utility~\citep{liu2009learning}.
Traditionally, \ac{LTR} systems were optimized using supervised learning on manually-created relevance judgements~\citep{chapelle2011yahoo}.
But the manual curation of relevance judgements is a time-consuming and costly process~\citep{chapelle2011yahoo,qin2010letor}.
Also, manually-graded relevance signals do not always align well with actual user preferences~\citep{sanderson2010user}. 
Due to these shortcomings, \ac{LTR} from user interactions has become a popular alternative to supervised \ac{LTR}~\citep{speretta2005personalized,jiang2016learning,chapelle2009dynamic,joachims2017unbiased}.

Learning from user interactions/click logs was introduced in the pioneering work of~\citet*{joachims2002optimizing}.
Click data is relatively cheap to collect and indicative of actual user preferences~\cite{radlinski2008does}.
In spite of these advantages, click data is known to be a noisy and biased estimate of the true user preferences~\cite{craswell2008experimental,oosterhuis2020unbiased}.
Some of the common biases identified in the \ac{LTR} literature are position bias~\cite{craswell2008experimental}:
trust bias~\cite{agarwal2019addressing}, and item-selection bias~\cite{oosterhuis2020policy}.
 
To counter the effect of bias, \citet{joachims2017unbiased} introduced counterfactual learning in the context of \ac{LTR}. 
They proposed the application of \acf{IPS}, a causal inference technique that has prevalence in the offline bandit learning literature~\citep{joachims2016counterfactual}.
\ac{IPS} models the probability of the user examining a document at a given displayed rank.
The inverse of the examination probability, i.e., the inverse propensity, is used to correct for the position bias. 
    As a result of the inverse weighing scheme, \ac{IPS}-based \ac{LTR} optimization is unaffected by position bias, in expectation~\citep{joachims2017unbiased}.
Since its introduction, there has been an increasing interest in the area, with several application of \ac{IPS} in the context of ranking~\citep{agarwal2019addressing,vardasbi2020inverse,oosterhuis2020policy,wang2018position}.
Recent work has also explored \ac{CLTR} under a stochastic logging policy, where some exploration is introduced, as opposed to pure exploitation~\citep{oosterhuis2021unifying,oosterhuis2020policy,yadav2021policy}.

With regard to safety in learning from user interactions, \citet{jagerman2020safe} introduced the notation of safe exploration for offline contextual bandit algorithms. 
The authors introduced \ac{SEA}, which applies high-confidence performance bounds to \emph{safely} choose between the deployment of a logging policy and a learned policy.
\citet{oosterhuis2021robust} applied this context to \ac{LTR} and introduced a generalization and specialization framework to safely choose between a generalized feature-based \ac{LTR} model, and a specialized tabular \ac{LTR} model.
The important difference between prior work and our work is that existing methods safely \emph{choose} between policies, whereas our method safely \emph{optimizes} a policy.
To the best of our knowledge, we are the first to consider notion of safety for the \emph{optimization} of \ac{LTR} models.

\subsection{Counterfactual risk minimization for offline learning from logs}
A relevant area closely related to \ac{CLTR} is off-policy learning, or offline learning from bandit feedback data~\citep{joachims2016counterfactual,saito2021counterfactual,swaminathan2015batch,he2019off}.
Off-policy learning tries to bridge the mismatch between the action distributions of a new policy and the logging policy~\cite{joachims2016counterfactual}.
The most common techniques used to achieve that goal are \ac{IPS} and importance sampling~\cite{horvitz1952generalization}.
However, as noted by~\citet{cortes2010learning}, the \ac{IPS} estimator can have unbounded variance, which can lead to large errors in its estimation.
Consequently, optimization with IPS can result in convergence problems and severely suboptimal policies. 

To account for this high-variance problem, \citet{swaminathan2015batch} introduced \acf{CRM}, an off-policy method that explicitly controls for the variance during off-policy learning from bandit feedback data. 
Their learning objective consists of both the \ac{IPS} loss and a variance regularization term, which minimizes the dissimilarity between the two policies.
This variance regularization term represents the \emph{risk} that stems from the variance of the IPS estimation.
Computing it requires a pass over the entire data which does not scale well. 
As a scalable alternative, \citet{wu2018variance} introduced \ac{VCRM}, where the authors estimate the \emph{risk} of the new policy by random sampling from the logged data. 
The objective function to be optimized in the \ac{VCRM} method is derived from a generic theoretical analysis of learning from importance sampling~\cite{cortes2010learning}. 
The risk term in the \ac{VCRM} method is defined in terms of a specific divergence between the logging policy and the new policy, known as the R\'enyi divergence~\citep{renyi1961measures}. 
To the best of our knowledge, there is no work on \ac{CRM} in a \ac{LTR} setting, making our work the first to propose a \ac{CRM} approach for the \ac{LTR} task.

\section{Background}

\subsection{Learning to rank}
The objective of learning to rank methods is to optimize a ranking policy ($\pi$), so that for user-issued queries ($q$) it provides the optimal ranking of their pre-selected candidate document sets ($D_q$)~\citep{liu2009learning}.
Formally, this objective can be expressed as the maximization of the following utility function:
\begin{equation}
    U(\pi) =  \mathbb{E}_{q} \mleft[ \sum_{d \in D_q} \rho(d \mid q, \pi) P(R=1 \mid d, q) \mright]. \label{true-utility}
\end{equation}
where $\rho(d \mid q, \pi)$ is the weight $\pi$ gives to document $d$ for query $q$.
The choice of $\rho$ determines what metric is optimized, for instance, the well-known \ac{NDCG} metric~\citep{jarvelin2002cumulated}:
\begin{equation}
    \rho_{\text{DCG}}(d \mid q, \pi) = \mathbb{E}_{y \sim \pi( \cdot \mid q)} \mleft[ (\log_2(\textrm{rank}(d \mid y) + 1))^{-1} \mright]. \label{rho}
\end{equation}
where $y$ is a ranking sampled from the policy $\pi$. For this paper, the aim is to optimize the expected number of clicks, the next subsection will explain how we choose $\rho$ accordingly.

\subsection{Counterfactual learning to rank}
\label{sec:background:cltr}

\textbf{Position bias in clicks.}
Optimizing the \ac{LTR} objective in Eq.~\ref{true-utility} requires access to the true relevance labels ($P(R=1 \mid d, q)$), which is often impossible in real-world ranking settings. 
As an alternative, \ac{CLTR} uses clicks, since they are present in abundance as logged user interactions. 
However, clicks are a biased indicator of relevance; for this paper, we will assume the relation between clicks and relevance is determined by a position-based click model~\citep{joachims2017unbiased, chuklin-click-2015}.
For a document $d$ displayed in ranking $y$ for query $q$, this means the click probability can be decomposed into a rank-based examination probability and a document-based relevance probability:
\begin{equation}
    P(C=1 \mid d, q, y) = P(E=1 \mid \text{rank}(d \mid y))  P(R=1 \mid d, q).   
    \label{click-model}
\end{equation}
The key characteristic of the position-based click model is that the probability of examination only depends on the rank at which a document is displayed: $P(E=1 \mid d, q, y) = P(E=1 \mid \text{rank}(d \mid y))$.
Furthermore, this model assumes that clicks only take place when a document is both relevant to a user and examined by them.
Consequently, the click signal is an indication of both the relevance and examination of documents.
Thus, the position at which a document is displayed can have a stronger effect on its click probability than its actual relevance~\citep{craswell2008experimental}.

\header{Inverse-propensity-scoring for \ac{CLTR}}
We assume a setting where $N$ interactions have been logged using the logging policy $\pi_0$, for each interaction $i$ the query $q_i$, the displayed ranking $y_i$, and the clicks $c_i$ are logged:
\begin{equation}
    \mathcal{D} = \big\{q_i, y_i, c_i \big\}^N_{i=1}. \label{logs}
\end{equation}
We will use $c_i(d) \in \{0,1\}$ to denote whether document $d$ was clicked at interaction $i$.
Furthermore, we choose $\rho$ to match the examination probabilities under $\pi$:
\begin{equation}
    \rho(d \mid q, \pi) =  \mathbb{E}_{y \sim \pi( \cdot \mid q)} \big[ P(E=1 \mid \text{rank}(d \mid y))  \big] = \rho(d).
    \label{eq:exposurenew}
\end{equation}
Hence, our optimization objective $U(\pi)$ is equal to the expected number of clicks (cf.\ Eq.~\ref{true-utility} and~\ref{click-model}).

In order to apply \ac{IPS}, we need the propensity of each document~\citep{joachims2017unbiased}, following \citet{oosterhuis2021unifying} we use:
\begin{equation}
\begin{split}
    \rho(d \mid q, \pi_0) &=  P(E =1  \mid \pi_{0}, d, q)\\
        &=  \mathbb{E}_{y \sim \pi_{0}( \cdot \mid q)} \big[ P(E=1 \mid \text{rank}(d \mid y))  \big] = \rho_0(d).
\label{policy-aware-exposure}
\end{split}
\end{equation}
Thus, the exposure of $d$ represents how likely it is examined when using $\pi_0$ for logging.
Thereby, it indicates how much the clicks on $d$ underrepresent its relevance.
For the sake of brevity, we drop $q$, $\pi$ and $\pi_0$ from our notation when their values are clear from the context: i.e., $\rho(d \mid q, \pi) = \rho(d)$ and $\rho(d \mid q, \pi_0) = \rho_0(d)$.

The exposure-based \ac{IPS} estimator takes each click in $\mathcal{D}$ and weights it inversely to $\rho_{0}(d)$ to correct for position-bias~\citep{joachims2017unbiased, oosterhuis2021unifying}:
\begin{equation}
    \hat{U}(\pi) = \frac{1}{N} \sum_{i=1}^{N} \sum_{d \in D_{q_i}}  \frac{\rho(d)}{\rho_{0}(d)} c_i(d).
    \label{cltr-obj}
\end{equation}
In other words, to compensate that position bias lowers the click probability a document by a factor of $\rho_{0}(d)$, clicks are weighted by $1/\rho_{0}(d)$ to correct for this effect in expectation.
As a result, clicks on documents that $\pi_0$ is likely to show at positions with low examination probabilities (i.e., the bottom of a ranking) receive a higher \ac{IPS} weight to compensate.

\header{Statistical properties of the IPS estimator}
The \ac{IPS} estimator $\hat{U}(\pi)$ (Eq.~\ref{cltr-obj}) is an unbiased and consistent estimate of our \ac{LTR} objective $U(\pi)$ (Eq.~\ref{true-utility})~\cite{oosterhuis2022reaching}. 
It is \emph{unbiased} since its expected value is equal to our objective:
\begin{equation}
\mathbb{E}_{q,y,c} \mleft[ \hat{U}(\pi) \mright] = U(\pi),
\end{equation}
and it is \emph{consistent} because this equivalence also holds in the limit of infinite data:
\begin{equation}
\lim\limits_{N \to \infty} \hat{U}(\pi) = U(\pi).
\end{equation}
For proofs of these properties, we refer to previous work~\cite{oosterhuis2020policy,joachims2017unbiased,oosterhuis2020learning}.

Importantly, the unbiasedness and consistency properties do not indicate that the actual IPS estimates will be reliable.
This is because the estimates produced by IPS are also affected by its variance:
\begin{equation}
    \mathrm{Var}_{y,c}\mleft[\hat{U}(\pi) \mid q\mright] = \sum_{d \in D_q}  \frac{\rho(d)^2}{\rho_{0}(d)^2} \mathrm{Var}_{y,c}\mleft[ c(d) \mid \pi_0, q\mright].
\end{equation}
The variance is large when some propensities are small, due to the $\rho_{0}(d)^{-2}$ term.
Hence, the actual estimates that IPS produces may contain large errors, especially when $N$ is relatively small or clicks are very noisy.
Thus, $\hat{U}(\pi)$ may be far removed from the true $U(\pi)$, and optimization with IPS may be unsafe and lead to unpredictable results.

\begin{figure*}[t]
\includegraphics[width=\linewidth,trim= 0 -1cm 0 0,clip]{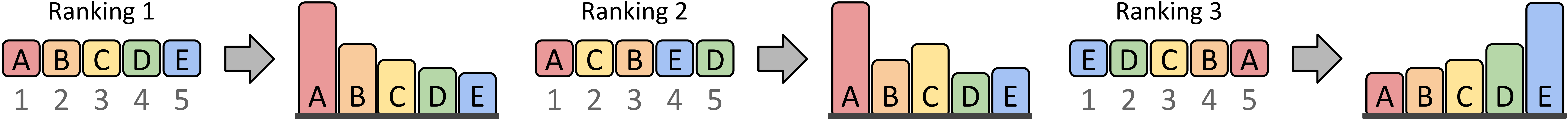}
\caption{
Three rankings and their normalized expected exposure distributions (Eq.~\ref{norm-exposure}) based on DCG weights (Eq.~\ref{rho}).
According to our exposure-based divergence, ranking 1 and ranking 2 are quite similar despite only agreeing on the placing of document A.
In contrast, ranking 1 and ranking 3 also agree on the placement of a single document (C) but have the highest possible dissimilarity, due to their highly mismatched exposure distributions.
}
\label{fig:exposure}
\end{figure*}

\subsection{Counterfactual risk minimization for offline bandit learning}\label{crm-bandit}

The foundational work by~\citet{swaminathan2015batch}  introduced the idea of \acf{CRM} for off-policy learning in a contextual bandit setup.
To avoid the negative effects of high-variance with \ac{IPS} estimation during bandit optimization, they utilize a generalization bound through the addition of a risk term~\citep{maurer2009empirical}.
With a probability of $1-\delta$, the \ac{IPS} estimate minus the risk term is a lower bound on the true utility of the policy: 
\begin{equation}
    P\big( U(\pi) \geq \hat{U}(\pi) - \text{Risk}(\delta) \big) > 1-\delta.
    \label{eq:generalizationbound}
\end{equation}
Therefore, optimization of the lower bound can be more reliable than solely optimizing the \ac{IPS} estimate ($\hat{U}(\pi)$), since it provides a high-confidence guarantee that a lower bound on the \emph{true} utility of the policy is maximized.
\citet{swaminathan2015batch} propose using the sample variance as the risk factor:
\begin{equation}
    \hat{U}_\text{action-CRM}(\pi) = \hat{U}_\text{action}(\pi) - \lambda \sqrt{\frac{1}{N} \mathrm{Var}\big[\hat{U}_\text{action}(\pi)\big]}, \label{crm-loss}
\end{equation}
where $\lambda \in \mathbb{R}^{>0}$ is an alternative to the $\delta$ parameter that also determines how probable it provides a bound on the true utility.
Importantly, this bound is based on an action-based \ac{IPS} estimator. 
For our \ac{LTR} setting this would translate to:
\begin{equation}
\hat{U}_\text{action}(\pi)
=
\frac{1}{N}\sum^{N}_{i=1} \frac{\pi(y_i \mid q_i)}{\pi_0(y_i \mid q_i)}
\sum_{d \in D_{q_i}} c_i(d).
\label{eq:actionips}
\end{equation}
Action-based \ac{IPS} estimation does not work well in the \ac{LTR} setting because the large number of possible rankings results in extremely small action propensities: $\pi_0(y_i \mid q_i)$, creating a high-variance problem.
As discussed in Section~\ref{sec:background:cltr}, for this reason \ac{CLTR} uses exposure-based propensities instead (Eq.~\ref{policy-aware-exposure} and~\ref{cltr-obj}), as they effectively avoid extremely small values.
As a result, the \ac{CRM} approach from \citep{swaminathan2015batch} is not effective for \ac{CLTR}, since the high-variance of its action-based \ac{IPS} make the method impractical in the ranking setting.

Another downside of the \ac{CRM} approach is that the computation of the sample-variance requires a full-pass over the training dataset, which is computationally costly for large-scale datasets. 
As a solution, \citet{wu2018variance} introduce variational \ac{CRM} (VCRM) which uses an upper bound on the variance term based on the R\'enyi divergence between the new policy and the logging policy~\citep{renyi1961measures}.
This R\'enyi divergence is approximated via random sampling, thus making the VCRM method suitable for stochastic gradient descent-based training methods~\cite{nowozin2016f}. 
Nevertheless, this \ac{CRM} approach still relies on action-based propensities, and therefore, does not provide an effective solution for the high-variance problem in \ac{CLTR}.

\section{A Novel Exposure-Based Generalization Bound for CLTR}

To develop a \ac{CRM} method for \ac{CLTR} with safety gaurantees, we aim to find a risk term that gives us a generalization bound as in Eq.~\ref{eq:generalizationbound}.
Importantly, this bound has to be effective in the \ac{LTR} setting, therefore, our approach should avoid action-based propensities. 
We take inspiration from work by \citet{wu2018variance}, who use the fact that the R\'enyi divergence is an upper bound on the variance of an \ac{IPS} estimator:
\begin{equation}
    \mathrm{Var} \mleft[ \hat{U}_\text{action}(\pi) \mright] \leq d_2(\pi \,\Vert\, \pi_0),
    \label{divergence-bandit}
\end{equation}
where $d_2$ is the exponentiated R\'enyi divergence between the new policy and the logging policy~\citep{renyi1961measures}:
 \begin{equation}
    d_2(\pi \,\Vert\, \pi_0) = \mathbb{E}_{q}\mleft[ \sum_{y} \mleft(\frac{\pi(y \mid q)}{\pi_0(y \mid q)}\mright)^2 \pi_0(y \mid q)  \mright].
    \label{eq:actionbaseddiv}
\end{equation}
In other words, the dissimilarity between the logging policy and a new policy can be used to bound the variance of the IPS estimate of the new policy's performance.
However, because this divergence is based on action propensities, it is not effective in the \ac{LTR} setting.
 
Below, we introduce an exposure-based measure of divergence that can produce a desired generalization bound for \ac{LTR} optimization.
Section~\ref{sec:normexposure} introduces the concept of normalized exposure that treats rankings as exposure distributions.
Section~\ref{sec:varbound} proves that R\'enyi divergence based on normalized exposure can bound the variance of an exposure-based \ac{IPS} estimator.
Section~\ref{sec:perfbound} uses this variance bound to construct a generalization bound for \ac{CLTR}.

\subsection{Normalized expected exposure}
\label{sec:normexposure}

R\'enyi divergence is only valid for probability distributions, e.g., $d_2(\pi \,\Vert\, \pi_0)$ with $\pi(y \mid q)$ and $\pi_0(y \mid q)$.
However, expected exposure is not a probability distribution, i.e., the values of $\rho(d)$ (Eq.~\ref{eq:exposurenew}) or $\rho_0(d)$ (Eq.~\ref{policy-aware-exposure}) do not necessarily sum up to one, over all documents to be ranked.
This is because users generally examine more than a single item in a single displayed ranking~\citep{craswell2008experimental}, as a result, expected exposure can be seen as a distribution of multiple examinations.
Our insight is that a valid probability distribution can be obtained by normalizing the expected exposure:
\begin{equation}
        \rho'(d) = \frac{ \rho(d)}{\sum_{d' \in D} \rho(d')}
        = \frac{\rho(d)}{\mathrm{Z}},
    \label{norm-exposure}
\end{equation}
where the normalization factor is a constant that only depends on $K$, the (truncated) ranking length:
\begin{equation}
\begin{split}
    \mathrm{Z} = \sum_{d \in D}^{} \rho(d) & = \sum_{d \in D} \mathbb{E}_{y \sim \pi } \big[ P\big(E=1 \mid \text{rank}(d \mid y)\big)  \big]   \\
    & = \mathbb{E}_{y \sim \pi} \Big[  \sum_{d \in D}  P\big(E=1 \mid \text{rank}(d \mid y)\big)  \Big]    \\
    & = \mathbb{E}_{y \sim \pi} \Big[  \sum_{k=1}^K  P\big(E=1 \mid k \big)  \Big]   =  \sum_{k=1}^K  P\big(E=1 \mid k \big).  
    \end{split} 
    \label{total-exposure}
\end{equation} 
In this way, $\mathrm{Z}$ can be seen as the expected amount of examination that any ranking will receive, and $\rho'$ as the probability distribution that indicates how it is expected to spread over documents.

An important property is that the ratio between two propensities is always equal to the ratio between their normalized counterparts:
\begin{equation}
    \frac{\rho(d)}{\rho_0(d)} = \frac{\rho'(d)}{\rho'_{0}(d)}.
    \label{eq:ratio}
\end{equation}
This is relevant to \ac{IPS} estimation since it only requires the ratios between propensities, the proofs in the remainder of this paper make use of this property.

Finally, using the normalized expected exposure, we can introduce the exponentiated exposure-based R\'enyi divergence:
\begin{align}
    d_2(\rho \,\Vert\, \rho_0) &=  \mathbb{E}_{q} \bigg[ \sum_{d \in D_{q}}^{} \rho'_{0}(d) \mleft( \frac{\rho'(d)}{\rho'_{0}(d)} \mright)^{2}  \bigg].
    \label{eq:exposuredivergence}
\end{align}
The key difference between our exposure-based divergence and action-based divergence is that it allows policies to be very similar, even when they have no overlap in the rankings  they produce.
As an intuitive example, Figure~\ref{fig:exposure} displays three different rankings and their associated normalized expected exposure distributions; these are the distributions for deterministic policies that give 100\% probability to one of the rankings.
Under action-based divergence, these policies would have the highest possible dissimilarity since they have no overlap in their possible actions, i.e., the rankings they give non-zero probability.
In contrast, exposure-based divergence gives high similarity between ranking 1 and ranking 2, since the differences in their exposure distribution are minor.
We note that these rankings still disagree on the placement of all documents except one.
Conversely, for ranking 1 and ranking 3, which also only agree on a single document placement, exposure-based divergence gives the lowest possible similarity score because their exposure distributions are highly mismatched.
Importantly, by solely considering differences in exposure distributions, exposure-based divergence naturally weighs differences at the bottom of rankings as less impactful than changes that affect the top.
As a result, exposure-based divergence more closely corresponds with common ranking metrics (Eq.~\ref{true-utility}) than existing action-based divergences.

\subsection{Exposure-divergence bound on variance}
\label{sec:varbound}

We now provide proof that exposure-based divergence is an upper bound on the variance of \ac{IPS} estimators for \ac{CLTR}.
\begin{theorem}
\label{var-theorm}
    Given a ranking policy $\pi$ and logging policy $\pi_{0}$, with the expected exposures $\rho(d)$ and $\rho_{0}(d)$ respectively, the variance of the exposure-based \ac{IPS} estimate $\hat{U}(\pi)$ is upper-bounded by exposure-based divergence:
    \begin{equation}
        \begin{aligned}
            \mathrm{Var}_{q,y,c}\mleft[\hat{U}(\pi)\mright] \leq  \frac{\mathrm{Z} }{N} d_2(\rho \,\Vert\, \rho_0).  \label{variance-full}
    \end{aligned}
    \end{equation} 
\end{theorem}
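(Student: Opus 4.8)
The plan is to reduce the statement to the per-document variance expression that the paper records just before Figure~\ref{fig:exposure}, bound each document's contribution using only that clicks are binary, and then observe that what remains is \emph{exactly} $\tfrac{Z}{N}\,d_2(\rho\,\Vert\,\rho_0)$ once the propensities are re-expressed through normalised exposure. So the structure is: (i) collapse the variance to a sum over documents, (ii) bound each summand by $\rho_0(d)$, (iii) recognise the algebraic identity with $d_2$.

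First I would use that $\hat{U}(\pi)$ (Eq.~\ref{cltr-obj}) is an average of $N$ independent and identically distributed logged interactions, so $\mathrm{Var}_{q,y,c}[\hat{U}(\pi)] = \tfrac{1}{N}\,\mathrm{Var}_{q,y,c}\big[\sum_{d\in D_q}\tfrac{\rho(d)}{\rho_0(d)}c(d)\big]$. Conditioning on the query and invoking the position-based click model (Eq.~\ref{click-model}), under which the per-document click terms carry no covariance once $q$ is fixed, this inner variance collapses to $\sum_{d\in D_q}\tfrac{\rho(d)^2}{\rho_0(d)^2}\,\mathrm{Var}_{y,c}[c(d)\mid\pi_0,q]$, i.e., precisely the per-document variance decomposition stated just before Figure~\ref{fig:exposure}.

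Next I would bound the per-document terms. Since $c(d)\in\{0,1\}$ we have $c(d)^2=c(d)$, hence $\mathrm{Var}_{y,c}[c(d)\mid\pi_0,q]\le\mathbb{E}_{y,c}[c(d)\mid\pi_0,q]$; by the click-model factorisation (Eq.~\ref{click-model}) and the definition of exposure (Eq.~\ref{policy-aware-exposure}) this equals $\rho_0(d)\,P(R=1\mid d,q)\le\rho_0(d)$. Substituting back gives $\mathrm{Var}_{q,y,c}[\hat{U}(\pi)]\le\tfrac{1}{N}\,\mathbb{E}_q\big[\sum_{d\in D_q}\tfrac{\rho(d)^2}{\rho_0(d)}\big]$. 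Finally I would rewrite this using that the normalisation constant $Z=\sum_d\rho(d)=\sum_d\rho_0(d)$ depends only on the truncated ranking length $K$ and not on the policy (Eq.~\ref{total-exposure}), together with $\rho(d)/\rho_0(d)=\rho'(d)/\rho'_0(d)$ (Eq.~\ref{eq:ratio}): then $\tfrac{\rho(d)^2}{\rho_0(d)}=Z\,\rho'_0(d)\big(\tfrac{\rho'(d)}{\rho'_0(d)}\big)^2$, so summing over $d$ and taking $\mathbb{E}_q$ turns the right-hand side into exactly $\tfrac{Z}{N}\,d_2(\rho\,\Vert\,\rho_0)$ as defined in Eq.~\ref{eq:exposuredivergence}.

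The step I expect to be the main obstacle is the first one: justifying rigorously that the cross-document covariance terms drop out (and, if one reads the theorem's $\mathrm{Var}_{q,y,c}$ as the full unconditional variance rather than its query-conditional part, additionally controlling the between-query term $\mathrm{Var}_q[\mathbb{E}_{y,c}[\hat{U}(\pi)\mid q]]$). That is the only place where the structure of the position-based click model and of the logging policy really has to be exploited; the Bernoulli inequality $\mathrm{Var}_{y,c}[c(d)\mid\pi_0,q]\le\rho_0(d)$ and the subsequent identification with $d_2$ are both elementary.
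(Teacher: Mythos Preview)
Your proposal is correct and follows essentially the same route as the paper: i.i.d.\ reduction to a single interaction, per-document variance decomposition under the position-based model, the Bernoulli bound $\mathrm{Var}[c(d)]\le\mathbb{E}[c(d)]=\rho_0(d)P(R{=}1\mid d,q)\le\rho_0(d)$, and then the algebraic identification with $Z\,d_2(\rho\,\Vert\,\rho_0)$ via Eqs.~\ref{total-exposure} and~\ref{eq:ratio}. You are in fact more careful than the paper on the point you flag as the main obstacle: the paper simply asserts $\mathrm{Var}_{q,y,c}[\hat{U}(\pi)]=\tfrac{1}{N}\mathbb{E}_q[\mathrm{Var}_{y,c}[\hat{U}(\pi)\mid q]]$ (dropping the between-query term $\mathrm{Var}_q[\mathbb{E}_{y,c}[\cdot\mid q]]$ without comment), whereas you correctly note that this step deserves scrutiny.
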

\begin{proof}
    From the definition of $\hat{U}(\pi)$ (Eq.~\ref{cltr-obj}) and the assumption that queries $q$ are \ac{i.i.d}, the variance of the counterfactual estimator can be rewritten as an expectation over queries~\citep{oosterhuis2020taking}:
    \begin{equation}
        \mathrm{Var}_{q,y,c}\mleft[\hat{U}(\pi )\mright] =  \frac{1}{N}   \mathbb{E}_{q} \mleft[ \mathrm{Var}_{y,c}\mleft[ \hat{U}(\pi) \mid q\mright] \mright]. 
        \label{eq:vardecom}
    \end{equation}
    Since we have assumed a rank-based examination model (Section~\ref{sec:background:cltr}), the examinations of documents are independent.
    This allows us to rewrite the variance conditioned on a single query:
    \begin{align}
        & \mathrm{Var}_{y,c}\mleft[\hat{U}(\pi \mid q)\mright] =  \mathrm{Var}_{y,c}\mleft[ \sum_{d \in D_{q}}^{}   \frac{\rho(d)}{\rho_{0}(d)} c(d, q) \mright]   \\ 
        &\;\; = \sum_{d \in D_{q}}^{} \mathrm{Var}_{y,c}\mleft[  \frac{\rho(d)}{\rho_{0}(d)} c(d, q)  \mright] 
        \leq  \sum_{d \in D_{q}} \mathbb{E}_{c, y} \mleft[ \mleft( \frac{\rho(d)}{\rho_{0}(d)} c(d, q) \mright)^{2} \mright]. \nonumber
    \end{align}
   Since: $c(d, q)^2 = c(d, q)$, we can further rewrite to:
    \begin{align}
    \sum_{d \in D_{q}} \mathbb{E}_{c, y} \mleft[ \mleft( \frac{\rho(d)}{\rho_{0}(d)} c(d, q) \mright)^{2} \mright]
         &= \sum_{d \in D_q} \mathbb{E}_{c, y} \mleft[ \mleft( \frac{\rho(d)}{\rho_{0}(d)} \mright)^{2} c(d, q)  \mright]   \\
         &=  \sum_{d \in D_q}   \mleft( \frac{\rho(d)}{\rho_{0}(d)} \mright)^{2} P(C=1 \mid d, q, \pi_{0}).    \nonumber
    \end{align}
    Next, we use Eq.~\ref{click-model} and~\ref{policy-aware-exposure} to substitute the click probability;
    subsequently, we replace the examination propensities with normalized counterparts using Eq.~\ref{norm-exposure} and~\ref{eq:ratio};
    and lastly, we upper bound the result using the fact that $P(R=1 | d, q)  \leq 1$:
\begin{align}
    &\sum_{d \in D_{q}}  \mathbb{E}_{c, y} \mleft[ \mleft( \frac{\rho(d)}{\rho_{0}(d)} c(d, q) \mright)^{2} \mright] \nonumber \\
     &=  \sum_{d \in D_q}    \rho_{0}(d) \mleft( \frac{\rho(d)}{\rho_{0}(d)} \mright)^{2}  P(R=1 \,|\, d, q)
    \\[-1ex]
    &= \sum_{d \in D_q}  \mathrm{Z}\, \rho'_{0}(d) \mleft( \frac{\rho'(d)}{\rho'_{0}(d)} \mright)^{2}  P(R=1 | d, q)   
       \leq \mathrm{Z}   \sum_{d \in D_q}  \rho'_{0}(d) \mleft( \frac{\rho'(d)}{\rho'_{0}(d)} \mright)^{2}. \nonumber
\end{align}
   Finally, we place this upper bound for a single query back into the expectation over all queries (Eq.~\ref{variance-full}):
    \begin{equation}
        \frac{1}{N} \,  \mathbb{E}_{q} \mleft[ \mathrm{Var}_{y,c} \mleft[ \hat{U}(\pi ) \mid q \mright] \mright]   
         \leq \frac{ \mathrm{Z} }{N} \mathbb{E}_{q} \bigg[ \sum_{d \in D_q}  \rho'_{0}(d) \mleft( \frac{\rho'(d)}{\rho'_{0}(d)} \mright)^{2}  \bigg].
         \label{eq:varprooflast}
    \end{equation}
    Therefore, by Eq.~\ref{eq:vardecom}, \ref{eq:varprooflast} and the definition of exposure-based divergence in Eq.~\ref{eq:exposuredivergence}, it is a proven upper bound of the variance.
\end{proof}

\subsection{Exposure-divergence bound on performance}
\label{sec:perfbound}

Using the upper bound on the variance of an \ac{CLTR} \ac{IPS} estimator that was proven in Theorem~\ref{var-theorm}, we can now introduce a generalization bound for the \ac{CLTR} estimator.

\begin{theorem}
\label{CLTR-bound}
    Given the true utility $U(\pi)$ (Eq.~\ref{true-utility}) and its exposure-based \ac{IPS} estimate $\hat{U}(\pi)$ (Eq.~\ref{cltr-obj}), for the ranking policy $\pi$ and the logging policy $\pi_{0}$ with expected exposures $\rho(d)$ and $\rho_{0}(d)$, respectively,
    the following generalization bound holds with probability $1 - \delta$:
\begin{equation}
    U(\pi) \geq \hat{U}(\pi) -  \sqrt{ \frac{\mathrm{Z}}{N}  \Big(\frac{1-\delta}{\delta}\Big) d_2(\rho \,\Vert\, \rho_0)}.
\label{variance}
\end{equation}
\end{theorem}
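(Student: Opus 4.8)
The plan is to combine the variance bound from Theorem~\ref{var-theorm} with a one-sided concentration inequality. Since the exposure-based \ac{IPS} estimator is unbiased, $\mathbb{E}_{q,y,c}[\hat{U}(\pi)] = U(\pi)$, so lower-bounding $U(\pi)$ is equivalent to controlling the upper deviation of $\hat{U}(\pi)$ from its own mean. The factor $\tfrac{1-\delta}{\delta}$ appearing in Eq.~\ref{variance} is the signature of Cantelli's inequality (the one-sided Chebyshev bound); the two-sided version would yield a weaker constant $\tfrac{1}{\delta}$, so I would be careful to use the one-sided form.

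First I would invoke Cantelli's inequality: writing $\sigma^2 = \mathrm{Var}_{q,y,c}[\hat{U}(\pi)]$, for every $\epsilon > 0$,
\begin{equation}
P\big( \hat{U}(\pi) - U(\pi) \geq \epsilon \big) \;\leq\; \frac{\sigma^2}{\sigma^2 + \epsilon^2}.
\end{equation}
This needs nothing beyond finiteness of $\sigma^2$, which holds under the \ac{i.i.d}\ query assumption already used in Theorem~\ref{var-theorm}. Setting the right-hand side equal to $\delta$ and solving gives $\epsilon = \sqrt{\tfrac{1-\delta}{\delta}\,\sigma^2}$, so with probability at least $1-\delta$ we have $\hat{U}(\pi) - U(\pi) < \sqrt{\tfrac{1-\delta}{\delta}\,\sigma^2}$, i.e.\ $U(\pi) > \hat{U}(\pi) - \sqrt{\tfrac{1-\delta}{\delta}\,\sigma^2}$.

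Next I would substitute the variance upper bound from Theorem~\ref{var-theorm}, namely $\sigma^2 \leq \tfrac{\mathrm{Z}}{N}\, d_2(\rho \,\Vert\, \rho_0)$. Replacing $\sigma^2$ by this larger quantity only enlarges the subtracted term, so the event on which $U(\pi) \geq \hat{U}(\pi) - \sqrt{\tfrac{\mathrm{Z}}{N}\big(\tfrac{1-\delta}{\delta}\big) d_2(\rho \,\Vert\, \rho_0)}$ holds is a superset of the event above; hence it still holds with probability at least $1-\delta$, which is exactly Eq.~\ref{variance}.

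The main obstacle is not the algebra but the concentration step: one must use the one-sided (Cantelli) bound rather than plain Chebyshev to recover the precise $\tfrac{1-\delta}{\delta}$ constant, and one must justify that monotonically inflating the variance estimate preserves the high-confidence guarantee (it does, because it can only shrink the failure event). Everything else is the substitution of Theorem~\ref{var-theorm} and the definition of $d_2(\rho \,\Vert\, \rho_0)$ in Eq.~\ref{eq:exposuredivergence}.
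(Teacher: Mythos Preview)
Your proposal is correct and matches the paper's own proof essentially step for step: the paper also applies Cantelli's inequality to $\hat{U}(\pi)$, sets $\delta = \mathrm{Var}/(\mathrm{Var}+\lambda^2)$ to obtain $\lambda = \sqrt{\tfrac{1-\delta}{\delta}\,\mathrm{Var}}$, and then replaces the variance by the upper bound from Theorem~\ref{var-theorm}. Your added remark that inflating the variance can only shrink the failure event is a nice justification that the paper leaves implicit.
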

\begin{proof}
    As per Cantelli's inequality~\cite{ghosh2002probability}, given an estimator $\hat{X}$ with expected value $\mathbb{E}[\hat{X}]$ and variance $\mathrm{Var}[\hat{X}]$, the following tail-bound holds:
    \begin{equation}
        P(\hat{X} - \mathbb{E}[\hat{X}] \geq \lambda ) \leq \frac{\mathrm{Var}[\hat{X}]}{\mathrm{Var}[\hat{X}] + \lambda^2}. 
    \label{cantelli}
    \end{equation} 
    Since $\lambda > 0$ is a free parameter, we can define $\delta$ such that:
    \begin{equation}
        \delta \coloneq  \frac{\mathrm{Var}[\hat{X}]}{\mathrm{Var}[\hat{X}] + \lambda^2},
         \qquad
         \lambda = \sqrt{\frac{1-\delta}{\delta} \mathrm{Var}[\hat{X}]}.
    \end{equation} 
    Consequently, the following inequality holds:
    \begin{equation}
        P(\mathbb{E}[\hat{X}] \geq \hat{X} - \lambda ) \geq 1-\delta.
         \label{cantelli-inequality} 
    \end{equation}
    Building on this inequality, the following inequality must hold with probability $1-\delta$:
    \begin{equation}
        U(\pi) \geq \hat{U}(\pi) - \sqrt{ \frac{1-\delta}{\delta} \mathrm{Var}_{q,y,c}\mleft[\hat{U}(\pi)\mright]}.
        \label{inequality} 
    \end{equation}
    Finally, we can replace the variance with the upper bound from Theorem~\ref{var-theorm}, which completes the proof.
\end{proof}

\begin{figure}[t]
\centering
\includegraphics[width=0.95\columnwidth,trim= 0 -0.25cm 0 0,clip]{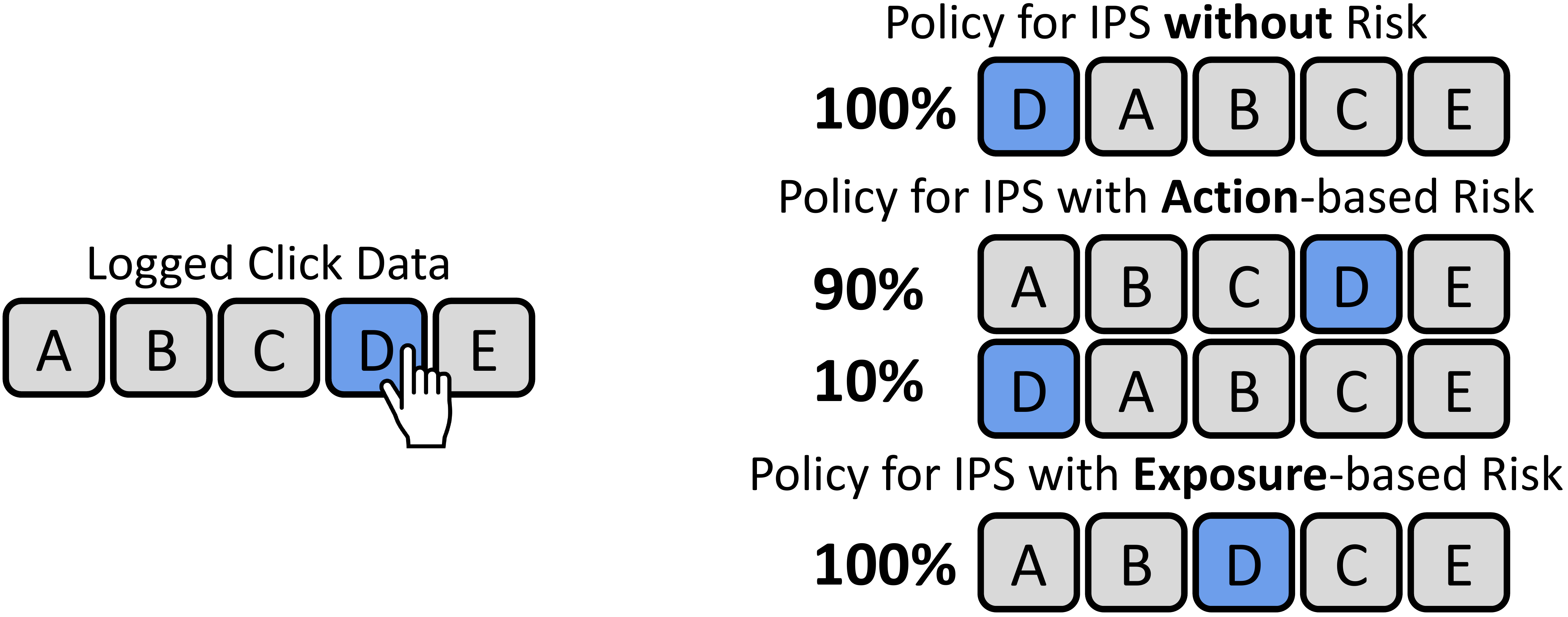}
\caption{
Example comparison of the optimal policy for a single logged click according to three different risk estimators.
}
\label{fig:riskcomparison}
\end{figure}

\noindent\textbf{Risk in \ac{CLTR}.}
Based on the generalization bound proposed in Theorem~\ref{CLTR-bound}, we see that it proposes the following measure of risk:
$\text{Risk}(\delta) = \sqrt{ \frac{\mathrm{Z}}{N}  \big(\frac{1-\delta}{\delta}\big) d_2(\rho \,\Vert\, \rho_0)}$ (cf.\ Eq.~\ref{eq:generalizationbound}).
Clearly, this risk is mostly determined by the exposure-based divergence between the new policy and the logging policy.
Thereby, it states that the greater the difference between how exposure is spread over documents by the logging policy and the new policy, the higher the risk involved.
Therefore, to optimize this lower bound, one has to balance the maximization of the estimated utility $\hat{U}(\pi)$ and the minimization of risk by not letting $\pi$ differ too much from $\pi_0$ in terms of exposure.

Furthermore, we see that our measure of risk diminishes as $N$ increases.
As a result, the risk term will overwhelm the \ac{IPS} term when $N$ is very low, as there is much risk involved when estimating based on a few interactions.
Conversely, when $N$ is very large, the risk term mostly disappears, as the \ac{IPS} estimate is more reliable when based on large numbers of interactions.
Thus, during optimization, the generalization bound is expected to mostly help with avoiding initial decreases in performance, while still converging at the same place as the standard \ac{IPS} estimator.

Lastly, the $\delta$ parameter determines the \emph{safety} that is provided by the risk, where a lower $\delta$ makes it more likely that the generalization bound holds.
Accordingly, as $\delta$ increases the risk term becomes smaller and will thus have less effect on optimization.

To the best of our knowledge, this is the first exposure-based generalization bound, which makes it the first method designed for safe optimization in the \ac{CLTR} setting.

\header{Illustrative comparison}
To emphasize the working and novelty of our exposure-based risk, a comparison of the optimal policies for action-based risk, exposure-based risk, and no risk are shown in Figure~\ref{fig:riskcomparison}.
We see that \ac{IPS} without a risk term places the once-clicked document at the first position, with 100\% probability.
This is very risky, as it greatly impacts the ranking while only being based on a single observation.
The action-based risk tries to mitigate this risk with a probabilistic policy that gives most probability to the logging policy ranking (90\%) and the remainder to the IPS ranking (10\%).
In contrast, with exposure-based risk, the optimal policy makes the risk and utility trade-off in a single ranking, that mostly follows the logging policy but places the clicked document slightly higher.

This example illustrates that because action-based risk does not have a similarity measure between rankings, it can only produce a probabilistic interpolation between the logging policy and IPS rankings.
Alternatively, because exposure-based risk does have such a measure, it produces a ranking that is neither the logging ranking nor the IPS ranking, but one with an exposure distribution that is similar to both.
Thereby, exposure-based risk has a more elegant and natural method of balancing utility maximization and risk minimization in the \ac{CLTR} setting.

\begin{table*}[ht]
\setlength{\tabcolsep}{0.02cm}
\centering
\caption{
NDCG@5 performance under different settings and datasets for several values of $N$, the number of logged interactions in the simulated training set.
Reported numbers are averages over 10 independent runs evaluated on the held-out test-sets, bold numbers indicate the highest performance.
Statistical significance for differences with the exposure-based \ac{CRM} are measured via a two-sided student-t test,
$^{\tiny \blacktriangledown}$ indicates methods with significantly lower NDCG with $p<0.01$, and $^{\tiny \star}$ no significant difference.
}
\label{tab:dcgresults}
\resizebox{\textwidth}{!}{
\begin{tabular}{ l ccc  ccc ccc}
 \toprule
&\multicolumn{3}{c}{ Yahoo! Webscope}
&\multicolumn{3}{c}{ MSLR-WEB30k}
&\multicolumn{3}{c}{ Istella}\\
\cmidrule(r){2-4}
\cmidrule(r){5-7}
\cmidrule{8-10}
& $N=4\cdot10^2$&  $N=4 \cdot 10^7$ &  $N=10^9$&  $N=4 \cdot 10^2$ &  $N=4 \cdot 10^7$ &  $N=10^9$&  $N=4 \cdot 10^2$ &  $N=4 \cdot 10^7$ &  $N=10^9$\\
 \midrule
\multicolumn{1}{l}{ Logging} &  0.677 \phantom{\small(0.000)}\phantom{$^{\tiny -}$}  &  0.677 \phantom{\small(0.000)}\phantom{$^{\tiny -}$}  &  0.677 \phantom{\small(0.000)}\phantom{$^{\tiny -}$}  &  0.435 \phantom{\small(0.000)}\phantom{$^{\tiny -}$}  &  0.435 \phantom{\small(0.000)}\phantom{$^{\tiny -}$}  &  0.435 \phantom{\small(0.000)}\phantom{$^{\tiny -}$}  &  0.635 \phantom{\small(0.000)}\phantom{$^{\tiny -}$}  &  0.635 \phantom{\small(0.000)}\phantom{$^{\tiny -}$}  &  0.635 \phantom{\small(0.000)}\phantom{$^{\tiny -}$}  \\ 
\multicolumn{1}{l}{ Skyline}&  0.727 \phantom{\small(0.000)}\phantom{$^{\tiny -}$} &  0.727 \phantom{\small(0.000)}\phantom{$^{\tiny -}$} &  0.727 \phantom{\small(0.000)}\phantom{$^{\tiny -}$} &  0.479 \phantom{\small(0.000)}\phantom{$^{\tiny -}$} &  0.479 \phantom{\small(0.000)}\phantom{$^{\tiny -}$} &  0.479 \phantom{\small(0.000)}\phantom{$^{\tiny -}$} &  0.714 \phantom{\small(0.000)}\phantom{$^{\tiny -}$} &  0.714 \phantom{\small(0.000)}\phantom{$^{\tiny -}$} &  0.714 \phantom{\small(0.000)}\phantom{$^{\tiny -}$} \\ \midrule
\multicolumn{1}{l}{ Naive}&  0.652 \small (0.021)$^{\tiny \blacktriangledown}$ &  0.694 \small (0.000)$^{\tiny \blacktriangledown}$ &  0.695 \small (0.000)$^{\tiny \blacktriangledown}$ &  0.353 \small (0.003)$^{\tiny \blacktriangledown}$ &  0.448 \small (0.000)$^{\tiny \blacktriangledown}$ &  0.448 \small (0.001)$^{\tiny \blacktriangledown}$ &  0.583 \small (0.007)$^{\tiny \blacktriangledown}$ &  0.661 \small (0.001)$^{\tiny \blacktriangledown}$ &  0.661 \small (0.001)$^{\tiny \blacktriangledown}$  \\ 
 \multicolumn{1}{l}{Action IPS}&  0.656 \small (0.008)$^{\tiny \blacktriangledown}$ &  0.701 \small (0.001)$^{\tiny \blacktriangledown}$ &  0.701 \small (0.001)$^{\tiny \blacktriangledown}$ &  0.359 \small (0.007)$^{\tiny \blacktriangledown}$ &  0.448 \small (0.001)$^{\tiny \blacktriangledown}$ &  0.448 \small (0.001)$^{\tiny \blacktriangledown}$ &  0.578 \small (0.004)$^{\tiny \blacktriangledown}$ &  0.671 \small (0.001)$^{\tiny \blacktriangledown}$ &  0.671 \small (0.002)$^{\tiny \blacktriangledown}$  \\ 
 \multicolumn{1}{l}{Action \ac{CRM}}&  0.617 \small (0.004)$^{\tiny \blacktriangledown}$ &  0.698 \small (0.001)$^{\tiny \blacktriangledown}$ &  0.700 \small (0.001)$^{\tiny \blacktriangledown}$ &  0.359 \small (0.005)$^{\tiny \blacktriangledown}$ &  0.448 \small (0.001)$^{\tiny \blacktriangledown}$ &  0.449 \small (0.001)$^{\tiny \blacktriangledown}$ &  0.449 \small (0.013)$^{\tiny \blacktriangledown}$ &  0.668 \small (0.002)$^{\tiny \blacktriangledown}$ &  0.672 \small (0.001)$^{\tiny \blacktriangledown}$  \\ 
 \multicolumn{1}{l}{Exp. IPS}&  0.659 \small (0.010)$^{\tiny \blacktriangledown}$ &  \textbf{0.723 \small (0.001)}$^{\tiny \star}$ &  \textbf{0.730 \small (0.001)}$^{\tiny \star}$ &  0.389 \small (0.014)$^{\tiny \blacktriangledown}$ &  \textbf{0.474 \small (0.001)}$^{\tiny \star}$ &  \textbf{0.481 \small (0.001)}$^{\tiny \star}$ &  0.576 \small (0.010)$^{\tiny \blacktriangledown}$ &  \textbf{0.696 \small (0.001)}$^{\tiny \star}$ &  \textbf{0.706 \small (0.001)}$^{\tiny \star}$  \\  
 \multicolumn{1}{l}{Exp. \ac{CRM}}&  \textbf{0.677 \small (0.001)}\phantom{$^{\tiny -}$} &  \textbf{0.723 \small (0.001)}\phantom{$^{\tiny -}$} &  \textbf{0.730 \small (0.000)}\phantom{$^{\tiny -}$} &  \textbf{0.434 \small (0.001)}\phantom{$^{\tiny -}$} &  \textbf{0.473 \small (0.001)}\phantom{$^{\tiny -}$} &  \textbf{0.480 \small (0.001)}\phantom{$^{\tiny -}$} &  \textbf{0.635 \small (0.001)}\phantom{$^{\tiny -}$} &  \textbf{0.695 \small (0.001)}\phantom{$^{\tiny -}$} &  \textbf{0.706 \small (0.001)}\phantom{$^{\tiny -}$}  \\ 
 \bottomrule
\end{tabular}
    }
\end{table*}

\section{A Novel Counterfactual Risk Minimization Method for LTR}
\label{risk-cltr}

Now that we have the proven generalization bound described in Section~\ref{sec:perfbound} (Theorem~\ref{CLTR-bound}),
we can propose a novel risk-aware \ac{CLTR} method for optimizing it.
The aim of our method is to find the policy that maximizes this high-confidence lower bound on the true performance.
In formal terms, we have the following optimization problem:
\begin{equation}
    \max_{\pi}  \hat{U}(\pi) -  \sqrt{ \frac{\mathrm{Z}}{N}  \Big(\frac{1-\delta}{\delta}\Big) d_2(\rho \,\Vert\, \rho_0)}.
\label{objgenbound}
\end{equation}
We propose to train a stochastic policy $\pi$ via stochastic gradient descent, therefore, we need to derive the gradient and find a method of computing it.
For the computation of the gradient w.r.t.\ the utility $\hat{U}(\pi)$, the first part of Eq.~\ref{objgenbound}, we refer to several prior work that discusses this topic extensively~\citep{oosterhuis2021computationally, oosterhuis2020policy, yadav2021policy}.
Thus, we can focus our attention on the second part of Eq.~\ref{objgenbound}:
\begin{equation}
\nabla_{\!\pi} \sqrt{ \frac{\mathrm{Z}}{N}  \Big(\frac{1-\delta}{\delta}\Big) d_2(\rho \,\Vert\, \rho_0)}
         = \sqrt{ \frac{\mathrm{Z}(1-\delta)}{ 4N\delta d_2(\rho \,\Vert\, \rho_0)}} \nabla_{\!\pi} d_2(\rho \,\Vert\, \rho_0).
\end{equation}
To derive the gradient of the exposure-based divergence function, we use the relation between $\rho$ and $\rho'$ from Eq.~\ref{total-exposure} and \ref{eq:ratio}:
\begin{equation}
\begin{split}
    \nabla_{\!\pi} d_2(\rho \,\Vert\, \rho_0)
    &=  \nabla_{\!\pi} \mathbb{E}_{q} \bigg[ \sum_{d \in D_q} \!  \rho'_{0}(d) \mleft( \frac{\rho'(d)}{\rho'_{0}(d)} \mright)^{2}   \bigg] 
    \\
    &= \frac{2}{\mathrm{Z}} \, \mathbb{E}_{q} \bigg[ \! \sum_{d \in D_q} \!\! \frac{\rho(d)}{\rho_{0}(d)} \nabla_{\!\pi} \rho(d)   \bigg] .
\end{split}
\end{equation}
Thus, we only need the gradient w.r.t.\ the exposure of a document ($\nabla_{\!\pi} \rho(d)$) to complete our derivation.
If $\pi$ is a \ac{PL} ranking model, one can make use of the specialized gradient computation algorithm from~\citep{oosterhuis2021computationally}.
However, for this work, we will not make further assumptions about $\pi$ and apply the more general log-derivate trick from the REINFORCE algorithm~\cite{williams1992simple}:
\begin{equation}
\nabla_{\!\pi} \rho(d) = \mathbb{E}_{y \sim \pi} \big[ P\big(E=1 \mid \text{rank}(d \mid y)\big)  \big]  \nabla_{\!\pi} \log \pi(y).
\end{equation}
Putting all of the previous elements back together, gives us the gradient w.r.t.\ the exposure-based risk function:
\begin{equation}
\mbox{}\hspace*{-2.5mm}
  \sqrt{\!\frac{
    1-\delta
   }{ N \delta \, \mathrm{Z}  \, d_2(\rho \,\Vert\, \rho_0) }} 
      \mathbb{E}_{q, y \sim \pi} \!\mleft[
      \!\mleft( \sum_{k=1}^K \! \frac{\rho(y_k)}{\rho_{0}(y_k)} P(E=1 |\, k)\!\mright) \nabla_{\!\pi}\!\log \pi(y) 
     \!\mright],
     \hspace*{-2.5mm}\mbox{}
\end{equation}
where $y_k$ is the document at rank $k$ in ranking $y$.
For a close approximation of this gradient, we substitute the gradient with the queries from the given dataset, and the rankings sampled from $\pi$ during optimization~\cite{williams1992simple,oosterhuis2021computationally}.

Similarly, since the exact computation of  is $d_2(\rho \,\Vert\, \rho_0)$ infeasible in practice, we introduce a sample-based empirical divergence estimator:
\begin{equation}
        \hat{d}_{2}(\rho \mid\mid \rho_{0}) = \frac{1}{N} \sum_{i=1}^{N} \sum_{d \in D_{q_i}}^{}   \rho_{0}'(d) \mleft( \frac{\rho'(d)}{\rho_{0}'(d)} \mright)^{2} .
    \label{empirical-div}
\end{equation}
This is an unbiased estimate of the true divergence given that the sampling process is truly monte-carlo~\cite{james1980monte}.

{\renewcommand{\arraystretch}{0.01}
\setlength{\tabcolsep}{0.04cm}
\begin{figure*}[th]
\centering
\begin{tabular}{c r r r }
&
 \multicolumn{1}{c}{ \small \hspace{0.5cm} Yahoo! Webscope}
&
 \multicolumn{1}{c}{ \small \hspace{0.5cm} MSLR-WEB30k}
&
 \multicolumn{1}{c}{ \small \hspace{0.5cm} Istella}
\\
\rotatebox[origin=lt]{90}{\hspace{0.77cm}\small  NDCG@5} &
\includegraphics[scale=0.475]{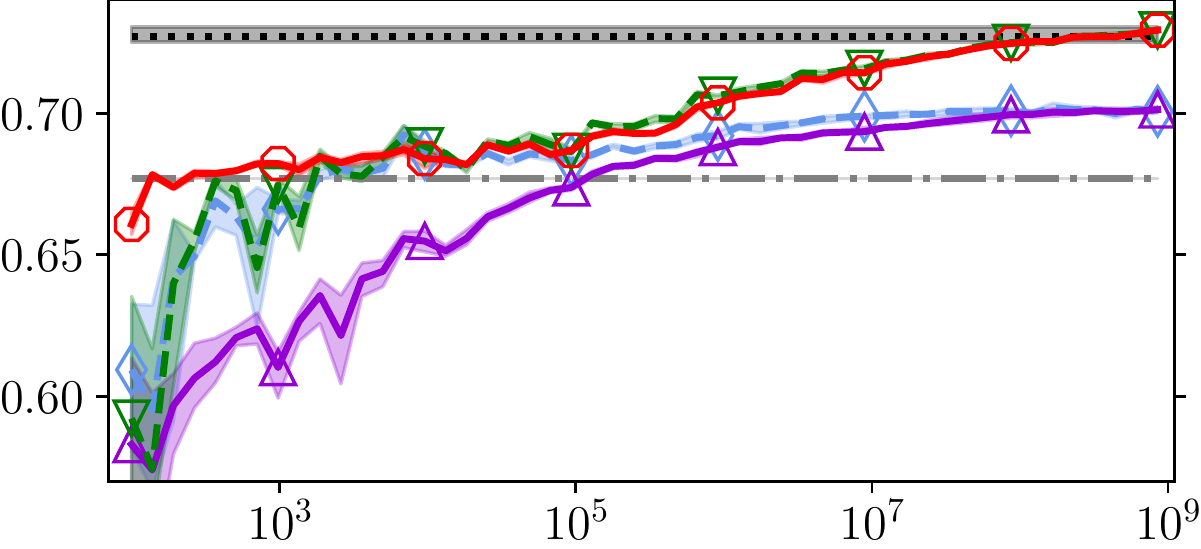} &
\includegraphics[scale=0.475]{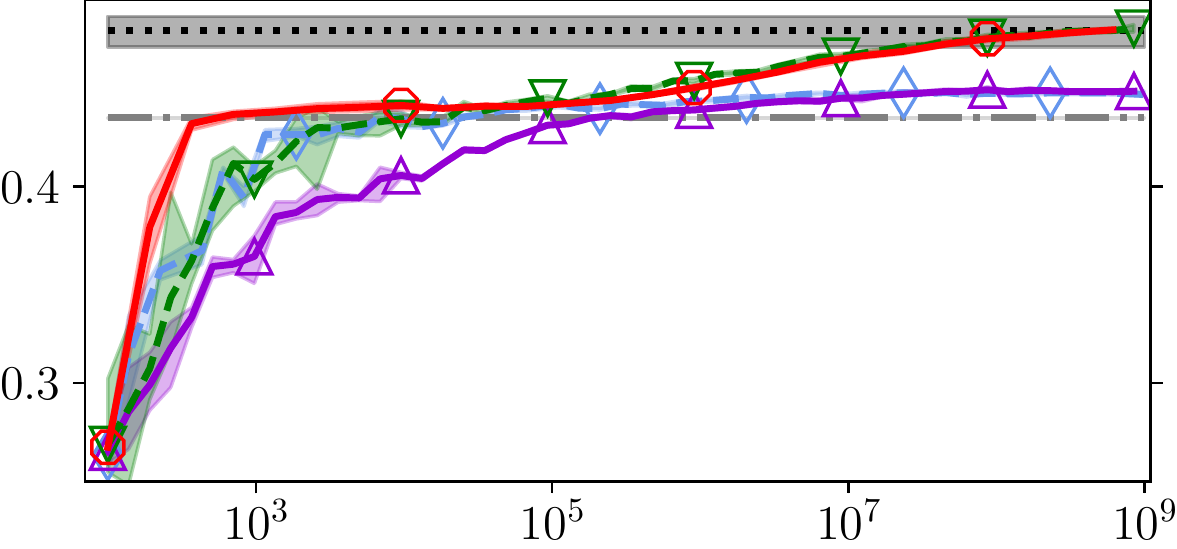} &
\includegraphics[scale=0.475]{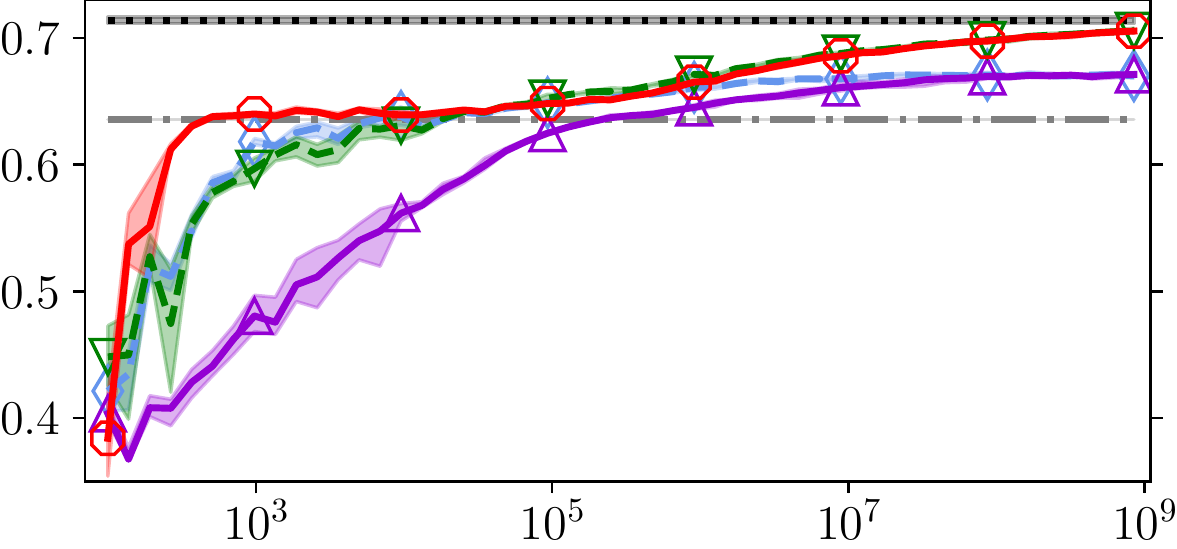}
\\
\rotatebox[origin=lt]{90}{\hspace{0.65cm} \small NDCG@5} &
\includegraphics[scale=0.475]{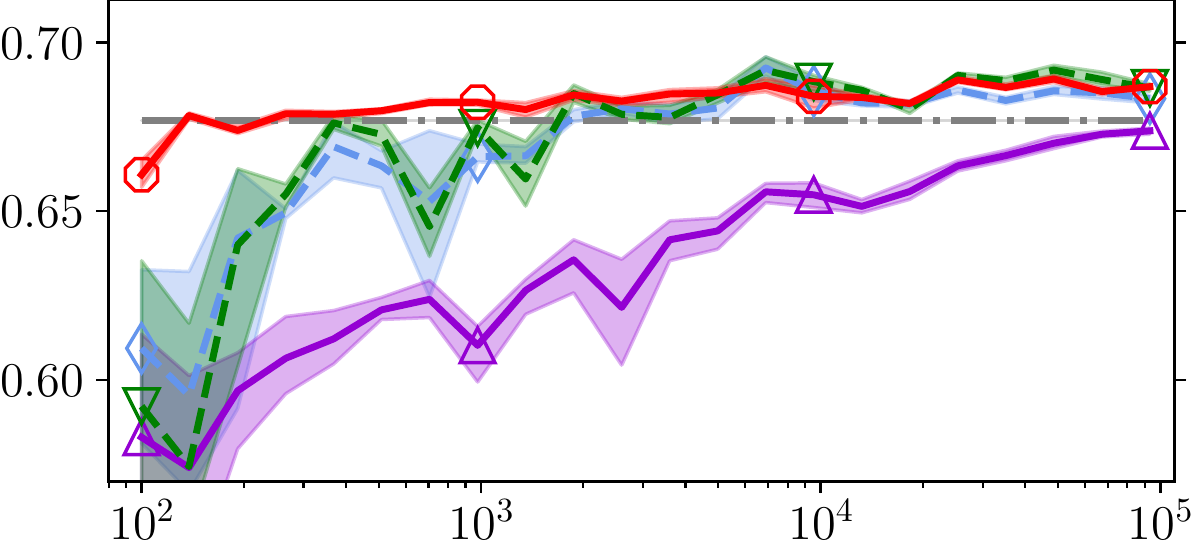} &
\includegraphics[scale=0.475]{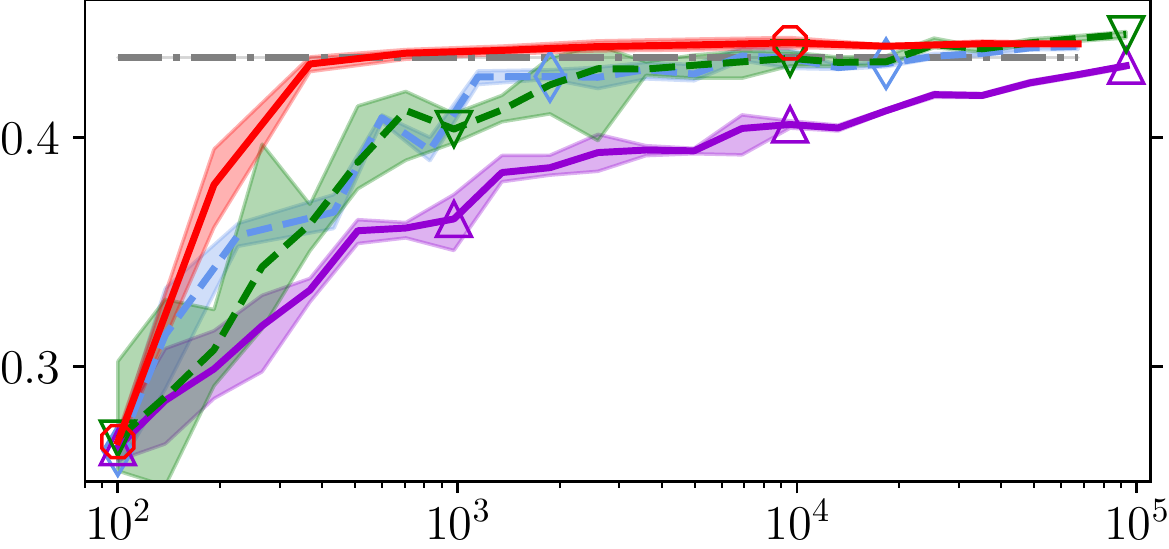} &
\includegraphics[scale=0.475]{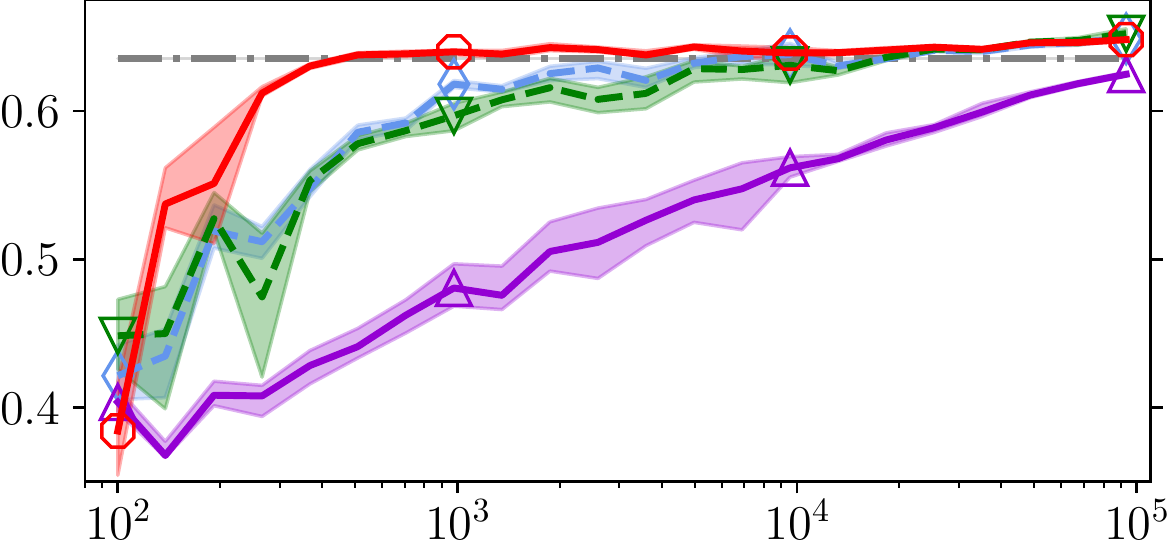} 
\\
& \multicolumn{1}{c}{\small \hspace{1.75em} Number of interactions simulated ($N$)}
& \multicolumn{1}{c}{\small \hspace{1.75em} Number of interactions simulated ($N$)}
& \multicolumn{1}{c}{\small \hspace{1.75em} Number of interactions simulated ($N$)}
\\[2mm]
    \multicolumn{4}{c}{
    \includegraphics[scale=0.47]{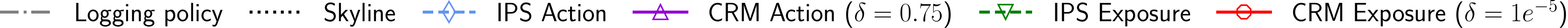}
}
\end{tabular}
\vspace{0.3\baselineskip}
\caption{
Performance in NDCG@5 of various \ac{IPS} and \ac{CRM} methods for \ac{CLTR}.
The top-row presents the results when the size of the training data is varied from extremely small ($10^2$) to extremely high ($10^9$).
The bottom-row is a zoomed-in view, focusing on the low-data region from $10^2$ to $10^5$. 
Results are averages over 10 runs; shaded areas indicate 80\% confidence intervals. 
}
\label{fig:mainresults}
\end{figure*}
}

\section{Experimental Setup}

For our experiments, we follow the semi-synthetic experimental setup that is common in the \ac{CLTR} literature~\citep{oosterhuis2021unifying,oosterhuis2021robust,joachims2017unbiased,vardasbi2020inverse}.
We make use of the three largest publicly available \ac{LTR} datasets: Yahoo!\ Webscope~\cite{chapelle2011yahoo}, MSLR-WEB30k~\citep{qin2013introducing}, and Istella~\citep{dato2016fast}.
The datasets consist of queries, a preselected list of documents per query, query-document feature vectors, and manually-graded relevance judgements for each query-document pair.
To generate clicks, we follow previous work~\citep{oosterhuis2021unifying,oosterhuis2021robust,vardasbi2020inverse} and train a logging policy on a $3\%$ fraction of the relevance judgements.
This simulates a real-world setting, where a production ranker trained on manual judgements is used to collect click logs, which can then be used for subsequent click-based optimization. 
Typically, in real-world ranking settings, given that the production ranker is used on live-traffic, it is deemed as a safe policy that can be trusted with real users.

We simulate a top-$K$ ranking setup~\cite{oosterhuis2020policy} where five documents are presented at once.
Clicks are generated with our assumed click model (Eq.~\ref{click-model}) and the following rank-based position-bias:
\begin{equation}
    P(E=1 \mid q, d, y) = 
\begin{cases}
    \left(\frac{1}{\textrm{rank}(d \vert y)}\right)^2& \text{if } \textrm{rank}(d \mid y) \leq 5,\\
    0              & \text{otherwise}.
\end{cases}    
\end{equation}
In real-world click data, the observed \ac{CTR} is typically very low~\citep{saito2020open,chen2019tiangong,li2010contextual};\ 
hence, to simulate such a sparse click settings, we apply the following transformation from relevance judgements to relevance probabilities:
\begin{equation}
    P(R = 1 \mid q, d) = 0.025 * rel(q,d) + 0.2,
    \label{click-model-simul}
\end{equation}
where $rel(q,d) \in \{0,1,2,3,4\}$ is the relevance judgement for the query-document pair and $0.2$ is added as click noise. 
During training, the only available data consists of clicks generated on the training and validation sets, no baseline method has access to the underlying relevance judgements (expect the skyline).

Furthermore, we assume a setting where the exact logging policy is not available during training.
As a result, the $\hat{\rho}_0$ propensities have to estimated, we use a simple frequency estimate following~\citep{oosterhuis2021unifying}:
\begin{equation}
    \hat{\rho}_0(d ) = \sum^N_{i=1} \frac{\mathds{1}\big[q = q_i\big]}{\sum^N_{j=1} \mathds{1}\big[q = q_j\big]}  P\big( E= 1 \mid \text{rank}(d \mid y_i)\big). \label{prop-estimate}
\end{equation}
For the action-based baselines, the action propensities $\hat{\pi}_0(y \mid q)$ are similarly estimated based on observed frequencies:
\begin{equation}
        \hat{\pi}_0(y \,|\, q) = \prod_{k=1}^{K-1} \hat{\pi}_0(y_k \,|\, q), \hspace*{1mm} %
        \hat{\pi}_0(y_k \,|\, q) = \sum^N_{j=1} \frac{ \mathds{1}\big[y_k = y_j] }{ \sum^N_{j=1} \mathds{1}\big[q = q_j\big] },
    \label{action-estimation}
\end{equation}
where $\hat{\pi}_0(y_k \,|\, q)$ is the estimated probability of $d$ appearing at rank $k$ for query $q$. 
As is common in \ac{CLTR}~\citep{oosterhuis2020learning, saito2021counterfactual, joachims2017unbiased}, we clip propensities by $10 / \sqrt{N}$ in the training set, to reduce variance, but not in the validation set.

We optimize neural \ac{PL} ranking models~\citep{oosterhuis2021computationally} with early stopping based on validation clicks to prevent overfitting. 
For the REINFORCE policy-gradient, we follow~\cite{yadav2021policy} and use the average reward per query as a control-variate for variance reduction.

As our evaluation metric, we compute NDCG@5 metric using the relevance judgements on the test split of each dataset~\citep{jarvelin2002cumulated}. 
All reported results are averages over ten independent runs, significant testing is performed with a two-sided student-t test.

Finally, the following methods are included in our comparisons:
\begin{enumerate}[label=(\roman*), leftmargin=*]
    \item \emph{Naive}. As the most basic baseline, we train on the generated clicks without any correction (equivalent to $\forall d, \, \rho_0(d)=1 $). %
     \item  \emph{Skyline.} To compare with the highest possible performance, this baseline is trained on the actual relevance judgements.
    \item  \emph{Action-based IPS.} Standard IPS estimation (Eq.~\ref{eq:actionips}) that is not designed for ranking and thus uses action-based propensities.
    \item  \emph{Action-based \ac{CRM}.} Standard \ac{CRM} (Eq.~\ref{crm-loss}) that is also not designed for ranking, for the risk function we use the action-based divergence function in Eq.~\ref{eq:actionbaseddiv}.
    \item  \emph{Exposure-based IPS}. The IPS estimator designed for \ac{CLTR} with exposure-based propensities (Eq.~\ref{cltr-obj}).
    The most important baseline, as it is the prevalent approach in the field~\cite{oosterhuis2020policy,oosterhuis2021unifying}.
     \item  \emph{Exposure-based \ac{CRM}.} Our proposed \ac{CRM} method (Eq.~\ref{objgenbound}) using a risk function based on exposure-based divergence.
\end{enumerate}

\section{Results and Discussion}

\subsection{Comparison with baseline methods}

The main results of our experimental comparison are presented in Figure~\ref{fig:mainresults} and Table~\ref{fig:mainresults}.
Figure~\ref{fig:mainresults} displays the performance curves of the different methods as the number of logged interactions ($N$) increases.
Table~\ref{fig:mainresults} presents performance at $N\in\{4 \cdot 10^2, 4 \cdot 10^7, 10^9\}$ and indicates whether the observed differences with our exposure-based \ac{CRM} method are statistically significant.

We start by considering the performance curves in Figure~\ref{fig:mainresults}.
We see that both the action-based and exposure-based \ac{IPS} baselines have an initial period of very similar performance that is far below the logging policy.
Around $N\approx10^4$ their performance is comparable to the logging policy, and finally at $N=10^9$ the exposure-based IPS has reached optimal performance, while the performance of action-based IPS is still far from optimal.
We can attribute this initial poor performance to the high variance problem of IPS estimation;
when $N$ is small, variance is at its highest, resulting in risky and sub-optimal optimization by the IPS estimators.
However, even when $N=10^9$, the variance of the action-based IPS estimator is too high to reach optimal performance, due to its extremely small propensities.
This illustrates why the introduction of exposure-based propensities was so important to the \ac{CLTR} field, and that even exposure-based IPS produces unsafe optimization when little data is available or variance from interactions is high.

Next, we consider whether action-based \ac{CRM} is able to mitigate the high variance problem of action-based \ac{IPS}.
Despite being a proven generalization bound, Figure~\ref{fig:mainresults} clearly shows us that action-based \ac{CRM} only leads to decreases in performance compared to its IPS counterpart.
It appears that this happens because the logging policy is not available in our setup, and the propensities have to be estimated from logged data.
Consequently, the action-based risk pushes the optimization to mimic the exact rankings that were observed during logging.
Thus, due to the variance introduced from the sampling of rankings from the logging policy, it appears that action-based \ac{CRM} has an even higher variance problem than action-based \ac{IPS}.
As expected, our results thus clearly indicate that action-based \ac{CRM} is also unsuited for the \ac{CLTR} setting, to our surprise; it is substantially worse than its IPS counterpart.

Finally, we examine the performance of our novel exposure-based \ac{CRM} method.
Similar to the other methods, there is an initial period of low performance, but in stark contrast, this period ends very quickly;
on Yahoo!\ logging policy performance is reached when $N \approx 125$, on MSLR-WEB30k when $N\approx350$ and on Istella when $N\approx400$.
For comparison, exposure-based IPS needs $N\approx1100$ on Yahoo!, $N\approx10^4$ on MLSR-WEB30k and $N\approx1.1\cdot10^4$ on Istella to do the same; meaning that our \ac{CRM} method needs roughly $89\%$, $97\%$ and $97\%$ fewer interactions, respectively.
In addition, Table~\ref{fig:mainresults} indicates that the logging policy performance is matched on all datasets when $N=400$ by exposure-based \ac{CRM}, where it also outperforms all baseline methods.
We note that there is still an initial period of low performance, because the logging policy is unavailable at training, and thus, its behavior still has to be estimated from logged interactions.
It is possible that in settings where the logging policy is fully known during training, this initial period is eliminated entirely.
Nevertheless, our results show that exposure-based \ac{CRM} reduces the initial periods of poor performance due to variance by an enormous magnitude.

{\renewcommand{\arraystretch}{0.01}
\setlength{\tabcolsep}{0.015cm}
\begin{figure*}[t]
\centering
\begin{tabular}{c r r r }
&
 \multicolumn{1}{c}{ \small \hspace{0.5cm} Yahoo! Webscope}
&
 \multicolumn{1}{c}{ \small \hspace{0.5cm} MSLR-WEB30k}
&
 \multicolumn{1}{c}{ \small \hspace{0.5cm} Istella}
\\
\\
\rotatebox[origin=lt]{90}{\hspace{0.2cm} \small \it Action-based \ac{CRM}} 
\rotatebox[origin=lt]{90}{\hspace{0.65cm} \small NDCG@5} &
\includegraphics[scale=0.475]{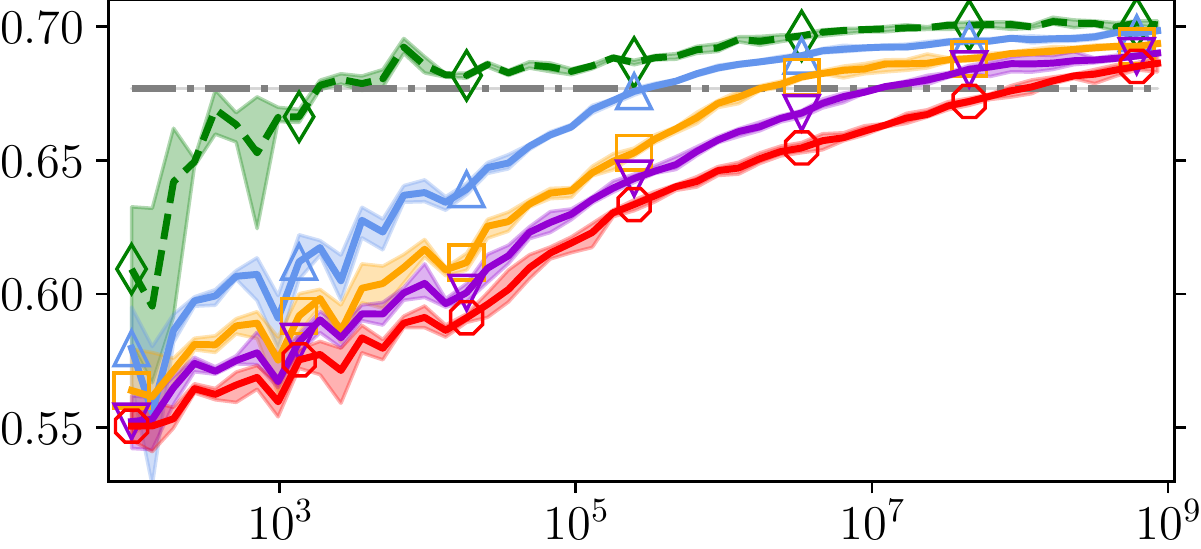} &
\includegraphics[scale=0.475]{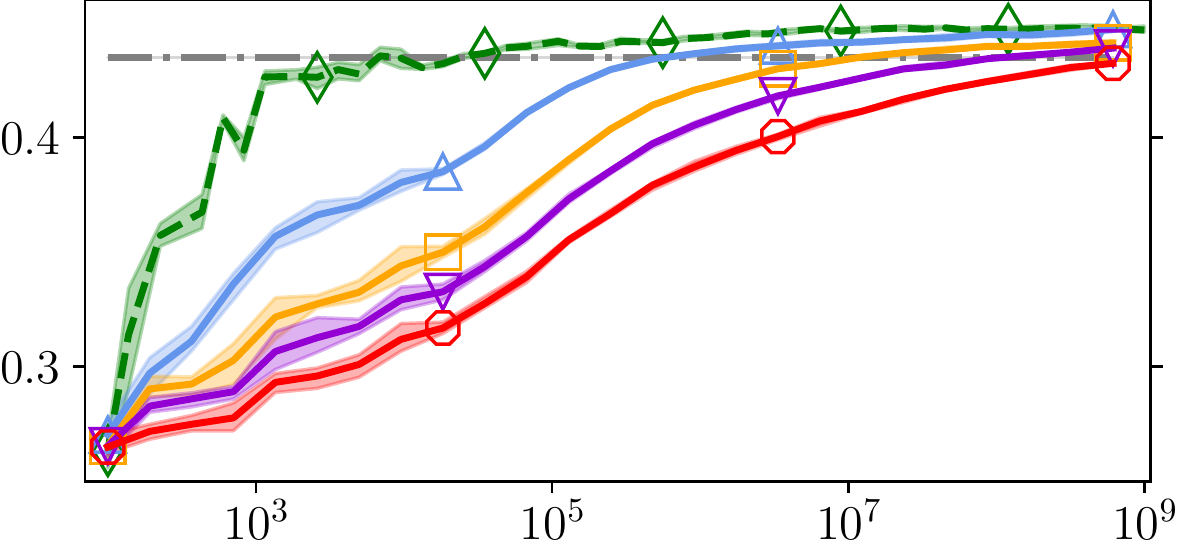} &
\includegraphics[scale=0.475]{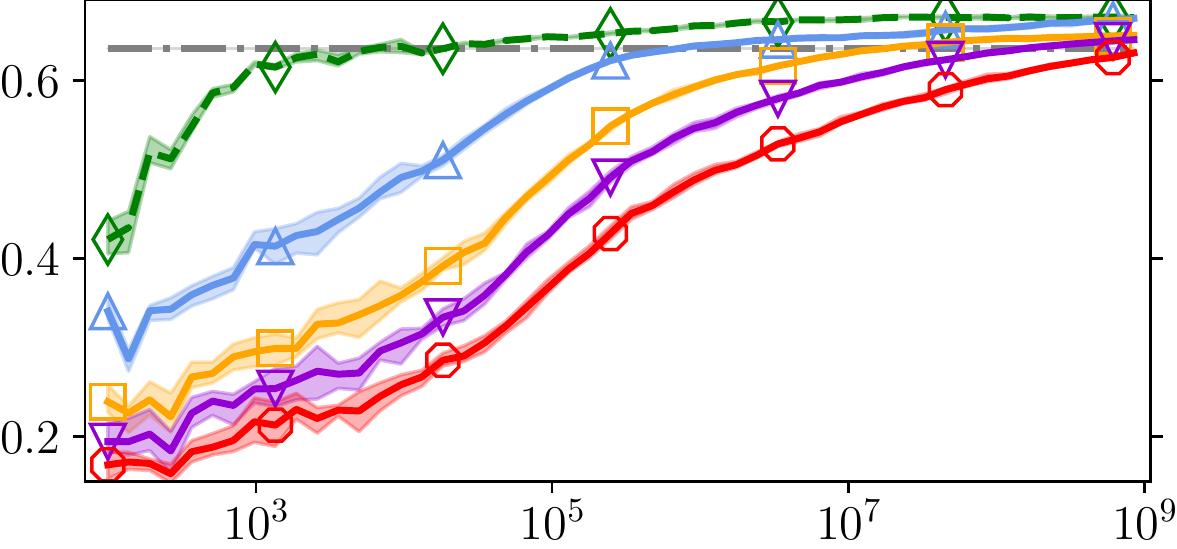} 
\\
\rotatebox[origin=lt]{90}{\hspace{0.1cm}\small\it Exposure-based \ac{CRM}} 
\rotatebox[origin=lt]{90}{\hspace{0.65cm} \small  NDCG@5} &
\includegraphics[scale=0.475]{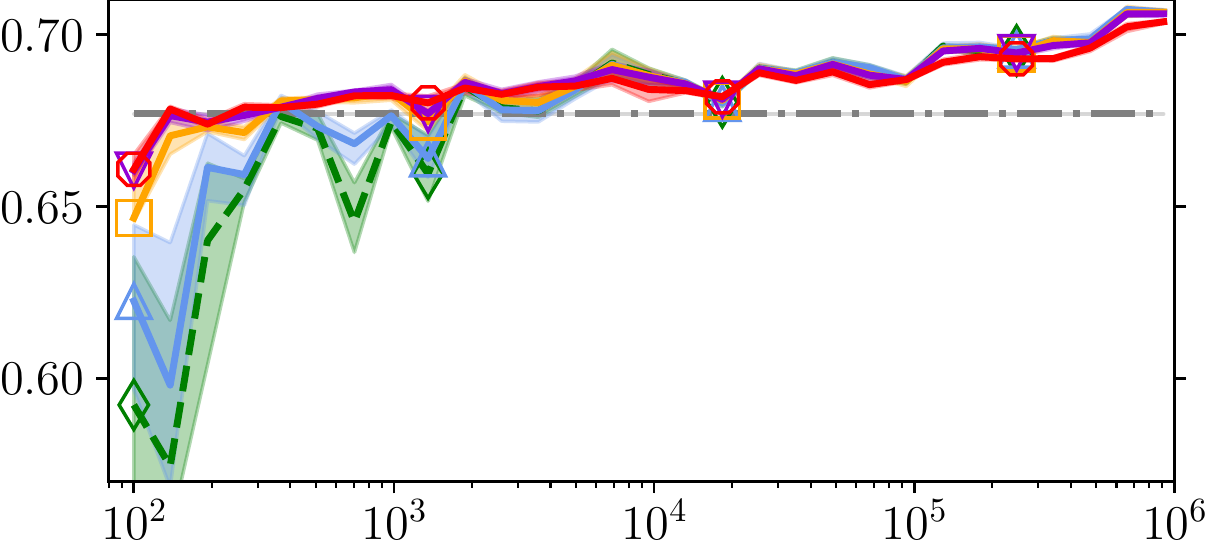} &
\includegraphics[scale=0.475]{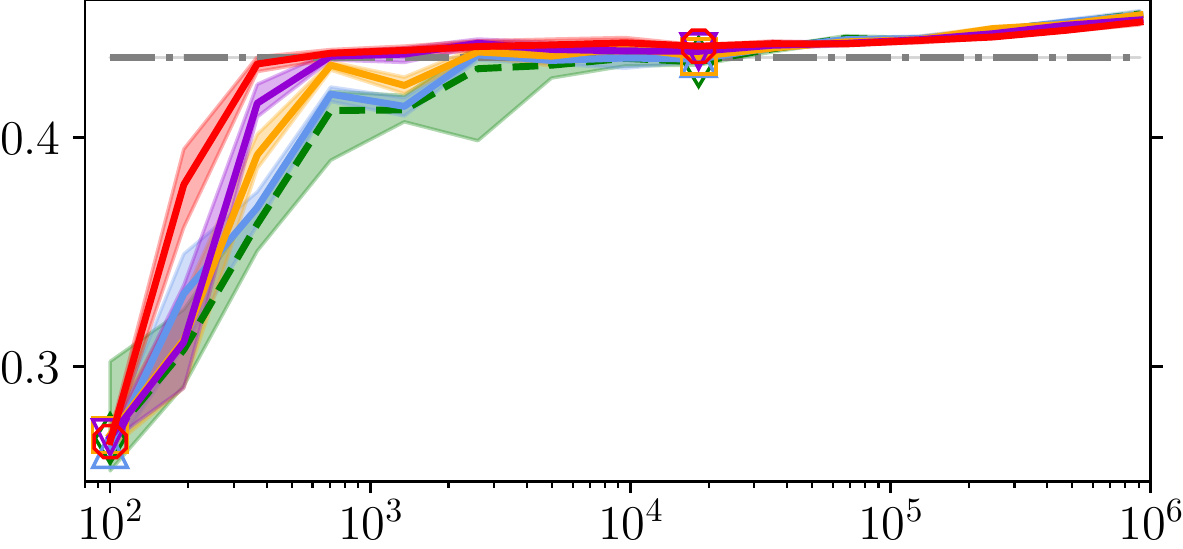} &
\includegraphics[scale=0.475]{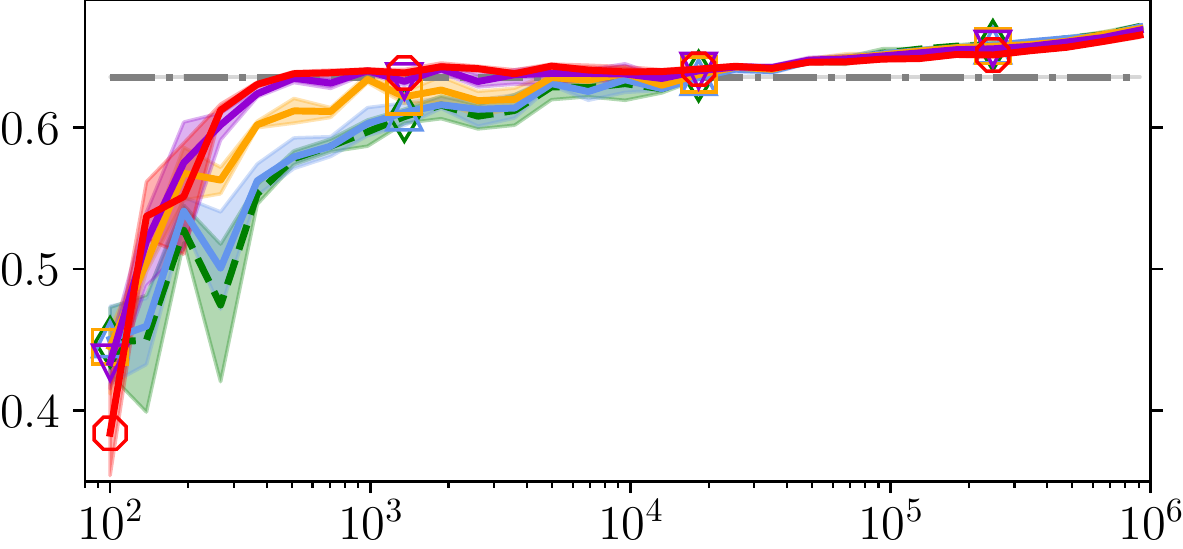}
\\
& \multicolumn{1}{c}{\small \hspace{1.5em} Number of interactions simulated ($N$)}
& \multicolumn{1}{c}{\small \hspace{1.5em} Number of interactions simulated ($N$)}
& \multicolumn{1}{c}{\small \hspace{1.5em} Number of interactions simulated ($N$)}
\\[2mm]
    \multicolumn{4}{c}{
    \includegraphics[scale=0.5]{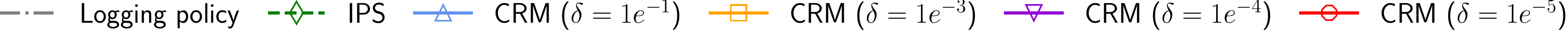}
}
\end{tabular}
\vspace{0.3\baselineskip}
\caption{
    Performance of \ac{CRM} methods with varying confidence parameters ($\delta$).
    Top-row: action-based \ac{CRM} baseline; bottom-row: our exposure-based \ac{CRM} method. 
    Results are averages of 10 runs; shaded areas indicate 80\% confidence intervals. 
}
\label{fig:ablationresults}
\end{figure*}
}

Furthermore, while the initial period is clearly improved, we should also consider whether there is a trade-off with the rate of convergence.
Surprisingly, Figure~\ref{fig:mainresults} does not display any noticeable decrease in performance when compared with exposure-based IPS.
Moreover, Table~\ref{fig:mainresults} shows the differences between exposure-based IPS and \ac{CRM} are barely measurable and not statistically significant when $N \in \{4\cdot 10^7,10^9\}$.
We know from the risk formulation in Eq.~\ref{objgenbound} that the weight of the risk term decreases as $N$ increases at a rate of $1/\sqrt{N}$.
In other words, the more data is available, the more optimization is able to diverge from the logging policy.
It appears that this balances utility maximization and risk minimization so well that we are unable to observe any downside of applying exposure-based \ac{CRM} instead of IPS.
Therefore, we conclude that, compared to all baseline methods and across all datasets, exposure-based \ac{CRM} drastically reduces the initial period of low performance, matches the best rate of convergence of all baseline, and has optimal performance at convergence.

\vspace*{-1mm}
\subsection{Ablation study on the confidence parameter}

To gain insights into how the confidence parameter $\delta$ affects the trade-off between safety and utility, an ablation study over various $\delta$ values was performed for both \ac{CRM} methods. 

The top-row of Figure~\ref{fig:ablationresults} shows us the performance of action-based \ac{CRM}, and contrary to expectation, a decrease in $\delta$ corresponds to a considerably worse performance.
For the sake of clarity, in theory, $\delta$ is inversely tied to safety, a lower $\delta$ should result in less divergence from the safe logging policy~\citep{oosterhuis2021robust}.
Conversely, we see that action-based \ac{CRM} displays the opposite trend.
We think this further confirms our hypothesis that a frequency estimate of action-based divergence has an even higher variance problem than action-based IPS.
Consequently, a higher weight to the risk function results in worse performance.
This further confirms our previous conclusion that action-based \ac{CRM} is unsuited for the \ac{CLTR} setting, regardless of how the $\delta$ parameter is tuned.

In contrast, the bottom-row of Figure~\ref{fig:ablationresults} displays the expected trend for exposure-based \ac{CRM};
as $\delta$ decreases the resulting performance gets closer to the logging policy.
With $\delta=0.1$, \ac{CRM} performs extremely close to its IPS counterpart, as optimization is less constrained to mimic the logging policy here.
Decreasing $\delta$ appears to have diminishing returns, as the difference between $\delta=10^{-4}$ and $\delta=10^{-5}$ is marginal.
Importantly, we do not observe any downsides to setting $\delta=10^{-5}$, thus we have not reached a point in our experiments where $\delta$ is set too conservatively.
This suggests that exposure-based \ac{CRM} is very robust to the setting of the $\delta$ parameter, and that a sufficiently low $\delta$ does not require fine-tuning.
Therefore, this shows that the improvements we observed when comparing with baseline methods, did not stem from a fine-tuning of $\delta$.
Thus, we can conclude that this robustness further increases the safety that is provided by exposure-based \ac{CRM}, as there is also little risk involved in the tuning of the $\delta$ parameter.

\section{Conclusion}

In this paper, we introduced the first \acf{CRM} method designed for \ac{CLTR}, that relies on a novel exposure-based divergence function.
In contrast with existing action-based \ac{CRM} methods, exposure-based divergence avoids the problem of the enormous combinatorial action space when ranking, by measuring the dissimilarity between policies based on how they distribute exposure to documents.
As a result, exposure-based \ac{CRM} optimization produces policies that rank similar to the logging policy when it is risky to follow \ac{IPS}, i.e., when little data is available or variance is very high.
Consequently, our experimental results show that it almost completely removes initial periods of detrimental performance;
to be precise, our method needed 89\% to 97\% fewer interactions than state-of-the-art \ac{IPS} to match production system performance.
Importantly, we observed no downsides in its application, as it maintained the same rate and point of convergence as \ac{IPS}, in all tested experimental settings.
Therefore, we conclude that our exposure-based \ac{CRM} method provides the safest \ac{CLTR} methods so far, as it almost completely alleviates the risk of decreasing the performance of a production system.

These improvements have large implications for practitioners who work on ranking systems in real-world settings, since the almost complete reduction of initial detrimental performance removes the main risks involved in applying \ac{CLTR}.
In other words, when applying our novel exposure-based \ac{CRM}, practitioners can have significantly less worry that the resulting policy will perform worse than their production system and hurt user experience.

We hope future work will further research the promising potential applications of exposure-based \ac{CRM}, for instance, in settings with fast turn-around times in deployment, or large numbers of tail-queries~\cite{wedig2006large,white2007studying}, where interaction data is limited.

\section*{Acknowledgements}
This research was supported by Huawei Finland and by the Hybrid Intelligence Center, a 10-year program funded by the Dutch Ministry of Education, Culture and Science through the Netherlands Organisation for Scientific Research, \url{https://hybrid-intelligence-centre.nl}. 
This work used the Dutch national e-infrastructure with the support of the SURF Cooperative using grant no. EINF-4963.
All content represents the opinion of the authors, which is not necessarily shared or endorsed by their respective employers and/or sponsors.

\section*{Reproducibility}
All experimental results in this work were obtained using publicly available data.
Our implementation is publicly available at \url{https://github.com/shashankg7/crm_ultr}.

\balance
\bibliographystyle{ACM-Reference-Format}
\bibliography{bibliography}

\end{document}